\documentclass[11pt,letterpaper]{article}
\usepackage[letterpaper,margin=0.9in]{geometry}
\usepackage[T1]{fontenc}
\usepackage{lmodern,mathrsfs}
\usepackage{amsmath,amssymb,amsthm,mathtools,bm}
\usepackage{graphicx,booktabs,tabularx,array,enumitem}
\usepackage{microtype,xcolor}
\usepackage{tikz,needspace}
\usepackage[normalem]{ulem}
\usetikzlibrary{arrows.meta,positioning,calc}
\definecolor{figink}{HTML}{18364A}
\definecolor{figblue}{HTML}{267CA5}
\definecolor{figlight}{HTML}{E8F2F7}
\definecolor{figmid}{HTML}{BDD8E5}
\definecolor{figorange}{HTML}{D76B32}
\definecolor{figgrey}{HTML}{69777F}
\usepackage[numbers,sort&compress]{natbib}
\usepackage[colorlinks=true,linkcolor=blue!45!black,citecolor=blue!45!black,urlcolor=blue!45!black]{hyperref}
\hypersetup{pdftitle={Single-Shot Error Correction at Optimal Spacetime Cost},pdfauthor={Kishor Bharti, Tobias Haug, Derek Khu, Andrew Tanggara}}
\newtheorem{theorem}{Theorem}[section]
\newtheorem{lemma}[theorem]{Lemma}

\newtheorem{corollary}[theorem]{Corollary}
\newtheorem{fact}[theorem]{Fact}
\newtheorem{assumption}[theorem]{Assumption}
\theoremstyle{definition}

\theoremstyle{remark}

\definecolor{revisionblue}{RGB}{0,70,180}
\definecolor{revisionblue}{RGB}{0,0,0}
\newcommand{\rev}[1]{{\color{revisionblue}#1}}
\definecolor{darkred}{RGB}{120,0,0}
\definecolor{darkpurple}{RGB}{120,0,120}
\definecolor{darkred}{RGB}{0,0,0}
\definecolor{darkpurple}{RGB}{0,0,120}

\newcommand{\atchg}[1]{\textcolor{darkred}{#1}}

\setlist{itemsep=3pt,topsep=5pt,leftmargin=1.6em}
\numberwithin{equation}{section}
\title{\LARGE\bfseries Single-Shot Error Correction at Optimal Spacetime Cost}
\author{Kishor Bharti$^{1,2,\star}$, Tobias Haug$^3$, Derek Khu$^4$, and Andrew Tanggara$^{5,6}$\\[5pt]
\small $^1$QuICS, NIST/University of Maryland, College Park, Maryland 20742, USA\\
\small $^2$UMIACS, University of Maryland, College Park, Maryland 20742, USA\\
\small $^\star$ Currently at IonQ Inc.\\
\small $^3$Quantum Research Center, Technology Innovation Institute, Abu Dhabi, United Arab Emirates\\
\small $^4$Institute of Advanced Intelligence and Computing, A*STAR, Singapore\\
\small $^5$Centre for Quantum Technologies, National University of Singapore, Singapore 117543\\
\small $^6$Nanyang Quantum Hub, Nanyang Technological University, Singapore 639673}
\date{}
\begin{document}
\maketitle
\begin{abstract}
Recent work established that, for independent erasures, storing $K$ logical qubits for $S$ time steps with error at most $\varepsilon$ requires a spacetime cost of $\Omega(S(K+\log(S/\varepsilon)))$. A matching construction was also given, but assumes ideal error correction. Here we show that the same scaling can be achieved with an explicit noisy error-correction circuit and efficient decoding, counting all state-preparation, gate, measurement, and wait locations, provided that the hardware supports long-range qubit connectivity and fast, reliable classical processing. Thus, reliability adds only a logarithmic overhead shared by all stored qubits. The construction works for sufficiently weak but otherwise general circuit noise, including faults during syndrome extraction and correlated faults, and for independent erasures it matches the best known lower bound up to constant factors. We attain this scaling using quantum Tanner codes, with each memory step consisting of one round of syndrome extraction followed by a fixed number of parallel classical decoder steps that reduce the residual error enough to keep later faults correctable, regardless of memory size, storage time, or target accuracy. Together, this gives exponentially small failure probability per step while keeping the cost of each memory step linear in the code size. The same analysis also extends to certain constant-depth logical Clifford operations.

\end{abstract}

\section{Introduction and main results}
\label{sec:introduction}
A quantum memory must preserve information while the operations used to protect it are themselves noisy. The threshold theorem shows that sufficiently weak noise can be overcome~\cite{Shor1996,AharonovBenOr1997,AliferisGottesmanPreskill2006}. The remaining question is how many physical resources this protection requires. Error correction consumes qubits, gates, measurements, and time, all of which contribute to the cumulative spacetime cost of storing a register.

For independent erasures, recent work established the lower bound
\begin{equation}
 C=\Omega\!\left(S\left(K+\log\frac{S}{\varepsilon}\right)\right)
 \label{eq:prior-bound}
\end{equation}
for storing $K$ logical qubits for $S$ time steps with error at most $\varepsilon$, and gave a matching construction under idealized error correction~\cite{BhartiHaugTanggara2026}. However, attaining this bound with a realistic noise model and noisy recovery requires more than a good code. Faulty syndrome measurements can lead to incorrect corrections, and the data continue to accumulate errors while the classical decoder runs.

Here we show that an explicit noisy recovery circuit with efficient decoding attains the same scaling. We use quantum Tanner codes and count every preparation, gate, measurement, and wait within the memory protocol. \rev{The construction tolerates coherent errors and correlations between faults, provided that the faulty locations satisfy the local-stochastic condition specified below.} We assume that the register enters and leaves the protocol encoded; a final destructive logical measurement is also possible. We further show that the same recovery method protects a class of logical Clifford operations at the same asymptotic cost per layer. 

Figure~\ref{fig:memory-concept} summarizes our method. Each cycle measures the checks once and runs a fixed number of parallel decoder iterations. Recovery need not eliminate every error: it must reduce the accumulated error enough for subsequent faults to remain correctable. This allows one code block to provide both capacity and reliability, with a logarithmic contribution to the block size shared by all stored qubits.

\begin{figure}[t]
\centering
\resizebox{\textwidth}{!}{%
\begin{tikzpicture}[x=1cm,y=1cm,>=Latex,
  every node/.style={font=\sffamily\fontsize{8}{9.5}\selectfont,text=figink},
  head/.style={font=\sffamily\bfseries\fontsize{9}{11}\selectfont,anchor=west},
  fault/.pic={\fill[figorange] (-.015,.16)--(-.10,-.015)--(-.015,.015)--(-.06,-.15)--(.11,.065)--(.025,.045)--cycle;}]
\path[use as bounding box] (-.05,-.20) rectangle (17.1,6.4);
\node[head] at (0,6.05) {\textbf{a}\quad One noisy recovery cycle};
\node at (2.20,5.43) {$K$ logical qubits in one Tanner block};
\foreach \y in {3.10,3.48,3.86,4.24,4.62} {
  \draw[figblue,line width=.7pt] (.18,\y)--(4.30,\y);
  \fill[figblue] (.18,\y) circle (.045);
}
\draw[->,figblue,line width=.7pt] (4.05,3.10)--(4.42,3.10);
\filldraw[fill=figlight,draw=figblue,line width=.7pt] (1.03,2.80) rectangle (2.13,4.91);
\filldraw[fill=figlight,draw=figblue,line width=.7pt] (2.86,2.80) rectangle (4.10,4.91);
\node[align=center] at (1.58,3.85) {Measure\\checks};
\node[align=center] at (3.48,3.85) {Decode\\+ correct};
\pic at (.63,4.24) {fault};
\pic at (1.23,4.63) {fault};
\pic at (2.35,3.48) {fault};
\pic at (4.25,4.24) {fault};
\node at (1.58,2.50) {One round};
\node at (3.48,2.50) {Fixed iterations};
\draw[->,figgrey,line width=.65pt] (4.20,2.03)--(4.20,1.59)--(.35,1.59)--(.35,2.05);
\node at (2.22,1.14) {Repeat for each memory cycle};
\pic at (.91,.43) {fault};
\node[anchor=west,text=figgrey] at (1.10,.43) {Circuit faults};
\begin{scope}[xshift=5.5cm]
\node[head] at (0,6.05) {\textbf{b}\quad Partial recovery is enough};
\fill[figlight] (.48,.92) rectangle (6.34,4.53);
\draw[figgrey,dashed,line width=.65pt] (.48,4.53)--(6.34,4.53);
\node[anchor=south east,text=figgrey] at (6.34,4.66) {Correctable limit};
\draw[->,line width=.65pt,figink] (.48,.92)--(.48,5.47);
\draw[->,line width=.65pt,figink] (.48,.92)--(6.48,.92);
\node[rotate=90] at (-.10,3.10) {Residual error bound};
\node[anchor=east,text=figgrey] at (.39,.92) {$0$};
\node at (3.40,.30) {Memory cycles};
\draw[figorange,line width=1.4pt] (.67,1.64)--(1.37,3.28);
\draw[figblue,line width=1.4pt] (1.37,3.28)--(1.65,1.80)--(1.80,1.80);
\draw[figorange,line width=1.4pt] (1.80,1.80)--(2.50,3.44);
\draw[figblue,line width=1.4pt] (2.50,3.44)--(2.78,1.86)--(2.93,1.86);
\draw[figorange,line width=1.4pt] (2.93,1.86)--(3.63,3.52);
\draw[figblue,line width=1.4pt] (3.63,3.52)--(3.91,1.90)--(4.06,1.90);
\draw[figorange,line width=1.4pt] (4.06,1.90)--(4.76,3.56);
\draw[figblue,line width=1.4pt] (4.76,3.56)--(5.04,1.92)--(5.19,1.92);
\draw[figorange,line width=1.4pt] (5.19,1.92)--(5.89,3.59);
\draw[figblue,line width=1.4pt] (5.89,3.59)--(6.17,1.94);
\foreach \x/\y in {1.65/1.80,2.78/1.86,3.91/1.90,5.04/1.92,6.17/1.94}
  \fill[figblue] (\x,\y) circle (.046);
\node[text=figorange] at (1.21,5.57) {New faults};
\draw[->,figorange,line width=.6pt] (1.21,5.22) to[out=-108,in=105] (1.09,2.94);
\node[text=figblue] at (3.78,5.57) {Recovery};
\draw[->,figblue,line width=.6pt] (3.96,5.22) to[out=-80,in=80] (3.86,2.94);
\node[text=figblue] at (3.41,1.31) {Errors remain correctable};
\end{scope}
\begin{scope}[xshift=13.50cm]
\node[head] at (0,6.05) {\textbf{c}\quad Optimal cost};
\draw[<->,figgrey,line width=.6pt] (-.28,2.38)--(-.28,5.00);
\node[rotate=90] at (-.64,3.69) {Physical width};
\fill[figblue] (.12,2.38) rectangle (3.28,4.05);
\filldraw[fill=figmid,draw=figblue,line width=.65pt] (.12,4.05) rectangle (3.28,5.00);
\node at (1.70,4.70) {Reliability};
\node at (1.70,4.32) {$O(\log(S/\varepsilon))$};
\node[text=white] at (1.70,3.37) {Capacity};
\node[text=white,font=\sffamily\fontsize{10}{11}\selectfont] at (1.70,2.91) {$O(K)$};
\draw[<->,figgrey,line width=.6pt] (.14,1.99)--(3.27,1.99);
\node at (1.70,1.60) {$S$ memory cycles};
\end{scope}
\end{tikzpicture}}
\caption{\textbf{Optimal quantum memory through repeated partial recovery.}
\textbf{a)} Each cycle measures every stabilizer check once, runs a fixed number of parallel decoder iterations, and applies a Pauli correction or updates the Pauli frame. Orange symbols indicate faults in the quantum circuit; we assume that classical processing is reliable.
\textbf{b)} New faults increase the error bound, while recovery reduces it without removing every error. The curve is schematic and assumes that every cycle stays within the allowed fault count. Exceeding that count in one cycle is exponentially unlikely in the block length.
\textbf{c)} The bar's height represents the physical width and its horizontal extent represents the storage duration. The two shades show the capacity and reliability contributions to the required width. Each cycle has linear physical cost, giving $O(S(K+\log(S/\varepsilon)))$ total locations.
}
\label{fig:memory-concept}
\end{figure}

\subsection{Optimal memory cost}
A \emph{memory cycle} is one complete error-correction step: one measurement round of each stabilizer-check type, a fixed number of parallel decoder iterations, and a Pauli correction or an update of the Pauli frame. \rev{Under the classical-control and connectivity assumptions below, one cycle takes a fixed number $\tau_{\rm cyc}$ of elementary time steps, including the waits during decoding. Once the code family, decoder iteration count, and hardware timings are fixed, $\tau_{\rm cyc}$ is independent of $N$, $K$, $S$, and $\varepsilon$. Storing the register for $S$ cycles therefore takes $S\tau_{\rm cyc}$ elementary time steps.}

\rev{At a faulty circuit location, we allow an arbitrary quantum operation on the qubits involved and a shared environment. This includes coherent errors, such as an unintended unitary rotation, whose Pauli components can interfere, as well as correlated errors where faults at different locations need not be independent. Dependence of quantum operations on earlier measurement outcomes is also allowed.}
The random set of those locations is independent of the logical input and satisfies a local-stochastic bound of strength $p$. Classical processing is reliable, and one parallel decoder iteration takes a fixed physical time. The required check interactions are available without a routing delay that grows with the code length. Section~\ref{sec:model} specifies these assumptions and the logical diamond-norm error criterion.

\Needspace{14\baselineskip}
\begin{theorem}[Optimal encoded quantum memory]
\label{m:memory}
There is a family of quantum Tanner codes and a threshold $p_*>0$ such that, at every fixed $p<p_*$, an encoded register of $K$ logical qubits can be stored for $S$ memory cycles with logical diamond-norm error at most $\varepsilon$. The construction uses a code block length $N$, peak quantum width $Q_{\rm peak}$, and total number of physical locations $C$ satisfying
\begin{equation}
 N=O\!\left(K+\log\frac S\varepsilon\right),\qquad
 Q_{\rm peak}=O(N),\qquad C=O(SN).
 \label{eq:memory-result}
\end{equation}
Each cycle measures every listed stabilizer check once and uses a fixed number of parallel single-shot decoder iterations. Its quantum depth, including decoder waits and an optional physical Pauli correction, is constant. Classical decoding requires $O(N)$ work per cycle. One fixed ideal recovery and decoding map defines the logical error for all \rev{fault paths} and measurement records.
\end{theorem}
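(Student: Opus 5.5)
The proof proceeds by induction over memory cycles, driven by a single-shot decoding invariant for quantum Tanner codes. \emph{Code.} I take a family of quantum Tanner codes (Leverrier--Zémor) with rate $\Theta(1)$, relative distance $\Theta(1)$, and constant check weight. I then choose the block length $N = c_1 K + c_2\log(S/\varepsilon)$, rounded up to an available member of the family. For $S$ cycles I may instead use several blocks of size $\Theta(K)$, with the reliability term shared; the simplest route is one block with $N\ge K/\rho$ and $N\ge c\log(S/\varepsilon)$. The cycle circuit measures every $X$- and $Z$-check once with one ancilla per check. This needs $Q_{\rm peak}=N+O(N)$ qubits and, under the connectivity assumption, constant depth, because each qubit lies in $O(1)$ checks and the interactions can be edge-coloured with $O(1)$ colours. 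The decoder is the parallel small-set-flip / sequential-to-parallel decoder of Leverrier--Zémor and Gu et al., run for a fixed number $T$ of iterations. Each iteration costs $O(N)$ classical work and constant time, so $\tau_{\rm cyc}=O(1)$ and the per-cycle location count is $O(N)$, giving $C=O(SN)$.

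\emph{Single-shot invariant.} The core deterministic lemma I would prove, or import in its known form, is the following. There are constants $\alpha,\beta,\gamma>0$ such that if the data residual error $e$ (reduced modulo stabilizers) has $|e|\le \alpha N$, and the cycle's faults (data faults plus syndrome-flip set $s$) have total weight $w\le \beta N$, then after $T$ parallel decoder iterations the new residual $e'$ satisfies $|e'|\le \gamma(|e|+w)$ with $\gamma$ small enough, i.e.\ $|e'|\le \alpha N/2 + O(w)\le\alpha N$. This uses the soundness and expansion of quantum Tanner codes: the reduced-weight decoder contracts errors up to a constant fraction of $N$, and measurement errors only perturb it by $O(|s|)$. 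This is the main obstacle. The standard results give contraction for parallel decoding only in the regime where the $T$ iterations leave a remainder proportional to syndrome noise, and I must check that a \emph{fixed} $T$, independent of $N$, suffices to reduce $\alpha N$ to $\alpha N/2$. I would first try a geometric per-iteration contraction factor $\lambda<1$, so $T=\lceil\log_{1/\lambda}4\rceil$. Circuit-level faults would be reduced to data-plus-syndrome faults of weight $O(w)$ by the constant-depth propagation of the check circuit (hook errors spread to $O(1)$ qubits).

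\emph{Probability and logical error.} By the local-stochastic assumption, the probability that a cycle has more than $\beta N$ faulty locations among its $O(N)$ locations is at most $\binom{O(N)}{\beta N}p^{\beta N}\le e^{-\delta N}$ once $p<p_*$. A union bound over $S$ cycles gives total failure $\le S e^{-\delta N}\le\varepsilon/2$ by the choice $N\ge \delta^{-1}\log(2S/\varepsilon)$. On the complementary event, induction keeps $|e|\le\alpha N<d/2$ throughout. The fixed ideal recovery map (a minimum-weight or ideal decoder applied to the residual, then unencoding) therefore returns the logical state exactly. I would handle coherent and correlated faults with the standard Aliferis--Gottesman--Preskill expansion: I decompose the noisy channel into fault paths indexed by faulty-location sets, bound the diamond norm of the bad part by the weight of bad sets, and note that each good path yields an identity logical action after ideal recovery for every measurement record. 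This gives diamond-norm error $\le\varepsilon$.
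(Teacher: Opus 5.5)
Your architecture matches the paper's proof: constant-depth extraction with bounded fault spread, an invariant $r\le\rho N$ kept whenever a cycle has at most $\eta N$ faults, a local-stochastic tail bound with a union bound over cycles, and a mixture over classical fault paths. The step you flag as the main obstacle is exactly what the paper imports from Gu et al.: the coefficient $\alpha_j$ on the incoming reduced weight decays exponentially in $j$, uniformly in $N$. Only that coefficient shrinks. The coefficient $\alpha_jh_d+\beta h_s+h_l$ on new faults does not vanish, so your ``$\gamma$ small'' applies to $|e|$ only, and $j$ must be fixed before the fault fraction is chosen.

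The first missing ingredient is the block length. Rounding $c_1K+c_2\log(S/\varepsilon)$ ``up to an available member of the family'' gives $N=O(K+\log(S/\varepsilon))$ only if consecutive available lengths differ by a bounded factor. Constant rate and relative distance say nothing about this. Using several blocks does not avoid it, since some block must still have length of order $\log(S/\varepsilon)$. The paper obtains it from the tower $G_i=\mathrm{PSL}_2(q_*^i)$ at a fixed odd prime power $q_*$, for which the required generating sets exist for all large $i$. This gives $N_{i+1}/N_i\le q_*(q_*^2+1)$, so every $x\ge N_0$ has an available length in $[x,c_Nx]$.

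The second gap is the treatment of coherent faults. Your induction tracks a single residual $e$, and the claim that a minimum-weight recovery ``returns the logical state exactly'' is about one Pauli. On a fixed good fault path, the Kraus operator for a record $y$ is a coherent sum of Pauli terms with different supports and syndromes. This is because the incoming error is already such a sum and each faulty location contributes several Pauli components. Citing AGP does not close this by itself. These residuals are controlled only in reduced weight, and their supports vary across terms and depend on $y$ through the decoder; they are not confined to a fixed small set determined by the fault locations. You need two facts.

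First, every term compatible with the actual record satisfies $y=\operatorname{syn}(e_\nu)+m_\nu$ with $r(e_\nu)\le r_{\rm in}+h_df$ and $|m_\nu|\le h_sf$. The discrepancy has no $r_{\rm in}$ part because ideal extraction reads each incoming term's own syndrome, while late faults enter both $e_\nu$ and $m_\nu$. Since the decoder bound is deterministic, the single correction $\widehat F(y)$ then contracts every term simultaneously (Lemma~\ref{m:record}).

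Second, the surviving error operators lie in the span $\mathsf E(\rho N)$. This span satisfies the Knill--Laflamme conditions: a product of two generators has reduced weight at most $2\rho N<d_N$, so it is either detectable or a stabilizer up to phase (Lemma~\ref{m:common}).

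With these two facts, your induction becomes a bound on every term of every Kraus branch, and your minimum-weight recovery is a valid common recovery. The mixture over good and bad fault paths then gives the diamond-norm bound.
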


The upper bound applies throughout the stated noise model. Independent erasures are one allowed case, and there the bound matches Eq.~\eqref{eq:prior-bound} up to constants; Corollary~\ref{m:tight} makes this comparison precise. Thus allowing noisy recovery does not increase the optimal memory scaling.

The resulting relative spacetime overhead is
\begin{equation}
 \frac{C}{KS}=O\!\left(1+\frac{\log(S/\varepsilon)}{K}\right).
 \label{eq:relative-overhead}
\end{equation}
For a sufficiently wide register this overhead is constant. For a small register, longer storage or higher accuracy requires additional redundancy. %

\subsection{Why constant-time recovery suffices}
The construction uses the single-shot decoder of Gu et al.~\cite{GuEtAl2024}, based on the decoding algorithms of Leverrier and Z\'emor~\cite{LeverrierZemor2023}. Theorem~3.5 and Corollary~3.6 of Ref.~\cite{GuEtAl2024} already establish repeated recovery and its accumulated failure bound. Here we combine these recovery guarantees with the circuit and fault model below to derive the cumulative memory cost, including the quantum waits during decoding.

The proof has three steps. First, bounded check weight and incidence give a syndrome-extraction circuit with constant depth and $O(N)$ locations. A fault can therefore affect only a bounded number of data qubits and syndrome bits. Second, the decoder contracts earlier errors sufficiently to accommodate the new faults in each cycle. A cycle with at most a sufficiently small fixed fraction of faulty locations leaves a correctable residual error. Third, the local-stochastic bound makes a cycle exceeding this fraction exponentially unlikely in $N$. A union bound over $S$ cycles then gives the logarithmic block-size requirement in Theorem~\ref{m:memory}.

For coherent faults, a bound on individual Pauli errors is not enough on its own. We show that all error operators arising on successful \rev{fault paths} belong to one correctable linear space. The Knill--Laflamme conditions then give a common recovery, preserving interference between Pauli terms and entanglement with an external reference. This converts the circuit analysis into a logical diamond-norm bound without assuming a stochastic Pauli error after each cycle.

\subsection{Logical Clifford operations}
The memory need not remain idle between recovery cycles. We show that the same bound extends to a prescribed sequence of $T$ logical Clifford layers whenever each layer has a constant-depth physical implementation with bounded error spread and the output code retains the same decoder guarantees. For a fixed number of blocks, with these implementations and \rev{the allocation of logical qubits to code blocks} available uniformly along the code family, the total cost is
\begin{equation}
 C=O\!\left(T\left(K+\log\frac{T}{\varepsilon}\right)\right).
 \label{eq:intro-clifford-cost}
\end{equation}
The input and output remain encoded. Transversal CNOT between identical blocks is an explicit example: every corresponding logical pair undergoes a CNOT in one physical gate layer. Specified logical Pauli products are also supported, \rev{as is a Hadamard gate on every physical qubit, with the output interpreted in the CSS code obtained by exchanging the $X$-type and $Z$-type stabilizer generators.} The extra error spread from these gates is bounded, so a fixed number of decoder iterations suffices after each layer.

We also characterize the physical CNOT circuits that realize a prescribed logical CNOT matrix between two identical CSS blocks. The allowed physical matrices form an explicit affine space. A representative with bounded row and column weights gives a constant-depth circuit with $O(N)$ gates and therefore meets the recovery theorem's requirements. This identifies the remaining task for selecting individual logical pairs within a Tanner block: finding such sparse representatives uniformly and efficiently. The characterization and its sparsity criterion are proved in Appendix~\ref{s:cnots}; their existence for every selection of logical pairs is an additional assumption, not a consequence of the memory construction.

\subsection{Relation to previous work}
Good quantum LDPC codes combine constant encoding rate, linear distance, and sparse stabilizer checks~\cite{PanteleevKalachev2022,LeverrierZemor2022}. Efficient decoders make these parameters useful for fault tolerance~\cite{DinurEtAl2023,LeverrierZemor2023}. Single-shot error correction, introduced by Bomb\'in~\cite{Bombin2015}, uses one noisy measurement round rather than a growing number of repetitions. Campbell developed its adversarial-noise formulation~\cite{campbell2019theory}, and Gu et al. proved single-shot decoding and constant-iteration repeated recovery for good quantum Tanner codes~\cite{GuEtAl2024}. These results supply the code and decoder used here. We combine them with a complete circuit analysis, including coherent fault operations and decoder waits, to obtain the memory bound with explicit dependence on $K$, $S$, and $\varepsilon$.

Conventional code families illustrate why the encoding rate matters for this cost. Storing each logical qubit in a separate distance-$d$ surface-code patch uses $\Theta(Kd^2)$ physical qubits~\cite{DennisEtAl2002,FowlerEtAl2012}. With the usual below-threshold estimate of an error probability exponentially small in $d$ per syndrome cycle, taking $d$ of order $\log(KS/\varepsilon)$ gives a memory cost of order $KS\log^2(KS/\varepsilon)$. Here $S$ counts physical syndrome cycles. This comparison concerns independent patches with local interactions; our Tanner construction uses long-range check connectivity and shares the reliability contribution across the register. The different connectivity assumptions are essential, as geometric restrictions can impose growing overhead~\cite{BaspinFawziShayeghi2023}.

For general quantum computation, Gottesman's framework and its realization with quantum expander codes establish constant space overhead~\cite{Gottesman2014,FawziGrospellierLeverrier2018}. Their original universal constructions incur polynomial time overhead. Subsequent protocols reduce this to quasi-polylogarithmic overhead~\cite{YamasakiKoashi2024}, polylogarithmic overhead including classical processing~\cite{TamiyaKoashiYamasaki2024}, and logarithmic overhead up to sub-polylogarithmic factors~\cite{NguyenPattison2025}; Han and Liu report strictly logarithmic time overhead~\cite{HanLiu2026}. These asymptotic statements use the respective circuit-size parameters and computational models of those works. Constant space overhead has also been extended to general noise~\cite{ChristandlFawziGoswami2025}. Our constant duration per protected layer applies to encoded memory and the stated Clifford implementations. It does not replace the broader gate and interface constructions needed for universal computation. The memory bound also complements lower bounds on the physical width required for fault tolerance~\cite{FawziMullerHermesShayeghi2022}.

\rev{Constant-depth logical Clifford circuits also have precedents. Zheng et al. apply a constant number of teleportation or measurement steps to the encoded data, using encoded ancillas prepared beforehand~\cite{ZhengEtAl2020}. Their resource analysis includes an average $O(N^2/\log N)$ physical-gate cost for an arbitrary Clifford circuit on a constant-rate block. The depth of the circuit acting on the data must therefore be distinguished from the cost of preparing its ancillas.} Our $O(N)$ cost applies to the restricted physical layers specified above, together with their noisy recovery. Logical Pauli measurement schemes~\cite{CohenEtAl2022,HuangEtAl2023} and high-rate surgery~\cite{ZhengJiangXu2025} provide other approaches to logical operations. Golowich, Chang, and Zhu construct constant-quantum-spacetime-overhead addressable gates by \rev{single-shot code switching} between product codes, allowing polynomial classical computation~\cite{GolowichChangZhu2025}.
Tan et al. achieve single-shot state preparation, error correction, and universal logical gates through constant-depth quantum gadgets that switch between two- and three-dimensional hypergraph-product codes~\cite{tan2025singleshotuniversalityquantumldpc}. 
Their construction uses a different code family and does not supply the sparse Tanner representatives assumed here. %

Section~\ref{sec:model} specifies the noise model and coding ingredients. Section~\ref{sec:memory} proves the memory theorem and its readout extension. Section~\ref{sec:clifford} states the Clifford extension and explains its scope. Appendix~\ref{app:clifford-proof} gives its proof, and Appendix~\ref{s:cnots} derives the CNOT characterization and the conditional cost for selected logical pairs.

\section{Noise model and coding ingredients}
\label{sec:model}
The resource bound depends on both the physical noise and the time required for classical recovery. We make these assumptions explicit before stating the Tanner-code and decoder guarantees used in the proof. Throughout, $K,S,T\ge1$, $0<\varepsilon\le1/2$, and logarithms are natural. The parameter $N$ denotes the number of data qubits in a code block; $S$ counts memory cycles and $T$ counts logical layers in the computation extension.

\subsection{Circuit locations and classical control}
We use the standard distinction between a \emph{fault}, which occurs at a circuit location, and an \emph{error}, which is its effect on the quantum state~\cite{GottesmanBook2024}. A location is a preparation, a gate acting on a fixed maximum number of qubits, a measurement, or a one-qubit wait for one elementary time step. Resets can be implemented by discarding and preparing a qubit. A two-qubit gate counts as one location; counting both operands changes the bounds only by a constant. The total cost $C$ counts all locations, including waits on every \atchg{data qubit and ancilla while it remains in the circuit}. 
The peak number of physical qubits present in the circuit is denoted by $Q_{\rm peak}$.
\atchg{A qubit is counted from its preparation or entry into the protocol until it is destructively measured or discarded.}

The hardware supports the specified check interactions and any additional gate interactions required by a theorem, without routing depth growing with $N$. Classical processing and stored measurement records are reliable. For the Tanner decoder, a fixed number of local parallel iterations takes fixed physical time. All quantum waits during this processing contribute to $C$ and are subject to noise. The $O(N)$ classical work per cycle is stated separately from the quantum-location count. Constant latency assumes sufficient parallel classical hardware and fixed-time communication along the decoder graph; it is not a claim about a fixed-size classical processor.

\subsection{Faults and logical error}
\rev{A \emph{fault path} specifies the set $F$ of faulty circuit locations~\cite{GottesmanBook2024}. We sample this set from a classical distribution independently of the logical input.} For every prescribed set $A$ of circuit locations,
\begin{equation}
 \Pr(A\subseteq F)\le p^{|A|}.
 \label{eq:ls}
\end{equation}
Thus any specified set of $r$ locations is simultaneously faulty with probability at most $p^r$. This local-stochastic condition allows correlations without requiring independent faults. At a faulty location, an arbitrary completely positive trace-preserving (CPTP) operation may act on the location's operands and a shared environment. Faulty preparations, resets, and measurements are local instruments; a measurement fault may also change its reported bit. The environment is initially independent of the logical input and its external reference. Fault operations may depend causally on earlier records and may be correlated across locations. An ideal location performs its prescribed operation without an additional coupling to the environment.

\rev{The model includes coherent errors for a fixed fault path, local-stochastic Pauli faults, and erasure faults whose locations are known and whose qubits are replaced. For example, an unintended rotation $e^{-i\vartheta P}=\cos\vartheta\,I-i\sin\vartheta\,P$ about a \atchg{Hermitian Pauli operator $P$, with $P^2=I$ and $\vartheta\in\mathbb R$,} acts as a coherent sum of Pauli terms, rather than a probabilistic choice between $I$ and $P$.} It does not allow a coherent superposition of distinct \rev{classical fault paths}. In particular, a small coherent overrotation at every location is not represented by calling those locations fault-free. Persistent leakage requires a separate analysis unless its influence remains within the bounded sets of locations and qubits used in the proof.

Let $\mathrm{Enc}$ be a fixed ideal encoding isometry and $\mathcal R$ a fixed ideal recovery followed by decoding. For a noisy physical memory channel $\mathcal P$, we require
\begin{equation}
 \|\mathcal R\circ\mathcal P\circ\mathrm{Enc}-\mathrm{id}_K\|_\diamond
 \le\varepsilon.
 \label{eq:error}
\end{equation}
The diamond norm includes arbitrary entanglement with an external reference. The ideal maps at the input and output define the error criterion; they are not physical circuits included in $C$. The input is therefore already encoded. The output may contain a correctable residual error and need not be close to the codespace in physical trace distance.

A Pauli correction may be applied physically or recorded in a \emph{Pauli frame}, a classical record of the correction to be accounted for in later operations. In the latter case the output includes the reliable frame register $f$. If $P_f$ is the recorded Pauli, the fixed comparison map first applies
\begin{equation}
 \sum_f |f\rangle\langle f|\otimes\rho_f
 \longmapsto\sum_f P_f\rho_fP_f^\dagger,
\end{equation}
and then performs recovery and decoding. This defines one channel, independent of the \rev{fault path}. For destructive readout we instead compare the decoded quantum-to-classical channels, including the conditional reference states. Unused syndrome and auxiliary records are discarded.

\subsection{Quantum Tanner codes and single-shot decoding}
A binary CSS code is specified by check matrices $H_X,H_Z$ with $H_XH_Z^T=0$. Their row spaces $\mathcal S_X$ and $\mathcal S_Z$ specify the $X$ and $Z$ stabilizers. Check lists may contain redundant rows. We write $k_N$ for the number of logical qubits and $d_N$ for the code distance. Paired logical representatives $x_i\in\ker H_Z$ and $z_i\in\ker H_X$ satisfy $x_i\cdot z_j=\delta_{ij}$. We fix the encoding
\begin{equation}
 \mathrm{Enc}_Q|a\rangle
 =2^{-\dim(\mathcal S_X)/2}
   \sum_{u\in\mathcal S_X}|u+\textstyle\sum_i a_ix_i\rangle.
 \label{eq:css-encoding}
\end{equation}
\rev{The chosen $X$-type and $Z$-type stabilizer generators have phase $+1$, and the codespace is their simultaneous $+1$ eigenspace.} This choice fixes the logical coordinates and phases used in the gate and readout statements; changing the logical basis during a computation is an additional operation. For memory with $K<k_N$, the encoding places the input in $K$ fixed logical coordinates and initializes the unused coordinates to $|0\rangle$. The ideal output map decodes the full block and discards those unused coordinates. This initialization is part of the encoded-input assumption.

For a binary error vector $e$, its \emph{reduced weight} $|e|_R$ is the minimum weight among vectors that differ from $e$ by a stabilizer of the same Pauli type. Such vectors act identically on the code. Single-shot decoding bounds the reduced weight left by one noisy syndrome round, rather than requiring that every residual error be removed.
\begin{fact}[Tanner codes and single-shot decoding]
\label{m:decoder}
The quantum Tanner families used in Refs.~\cite{LeverrierZemor2022,GuEtAl2024} have constants $R,\delta,w,q>0$ such that
\begin{equation}
 k_N\ge RN,\qquad d_N\ge\delta N,
 \label{eq:tanner}
\end{equation}
and every measured check has weight at most $w$, while each qubit belongs to at most $q$ checks. For either CSS error component, a deterministic decoder $D_j$, run for $j$ parallel iterations, has constants $a,b,c,\beta>0$ and $\alpha_j\to0$ exponentially in $j$ such that
\begin{align}
 a|e|_R+b|m|&\le cN
 \quad\Longrightarrow\quad
 |e+D_j(\operatorname{syn}(e)+m)|_R
 \le\alpha_j|e|_R+\beta|m|.
 \label{eq:decoder}
\end{align}
Here $m$ is the error in the reported syndrome, and $\operatorname{syn}(e)$ is the ideal syndrome. Constants absorb the fixed conversion between erroneous checks and erroneous local syndrome records. Fixed $j$ requires $O(N)$ classical work and a fixed number of local communication steps.
\end{fact}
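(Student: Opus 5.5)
The plan is to derive Fact~\ref{m:decoder} by assembling cited results rather than proving anything new, and to verify that the constants match the form in Eq.~\eqref{eq:decoder}. First, for the code parameters, I will recall the Leverrier--Z\'emor construction~\cite{LeverrierZemor2022}. One takes a left-right Cayley square complex on a group $G$ with generating sets $A,B$ of fixed size $\Delta$, together with local codes $C_A,C_B\subseteq\mathbb F_2^\Delta$ chosen so that the dual tensor codes are robust. Qubits sit on squares, so $N=\Delta^2|G|/4$ up to a constant. Each check is a basis element of a local tensor code on the $\Delta^2$ squares at a vertex, so its weight is at most $w=\Delta^2$. Each square meets four vertices, so each qubit lies in at most $q=O(\Delta^2)$ listed checks. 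Since $\Delta$ is fixed along the family, both are constants. The rate bound $k_N\ge RN$ follows from the dimension count of the local codes. The distance bound $d_N\ge\delta N$ is their main theorem, which needs expansion of the Cayley graphs and robustness of the local codes. I take both from that reference, for the same family used in~\cite{GuEtAl2024}.

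Second, for the decoder, I will invoke the single-shot theorem of Gu et al.~\cite{GuEtAl2024}, which builds on the sequential and parallel decoders of~\cite{LeverrierZemor2023}. For one CSS component, it states the following. If the data error $e$ has reduced weight below a constant fraction of $N$, and the syndrome error $m$ is also small, then one parallel decoding pass returns a correction whose residual has reduced weight at most $\alpha|e|_R+\beta'|m|$, with a fixed $\alpha<1$. I will rewrite the two smallness hypotheses as the single linear condition $a|e|_R+b|m|\le cN$. To do so, I take $a,b$ proportional to the reciprocals of the separate thresholds and shrink $c$ so that the joint condition implies both. The $X$ and $Z$ components are treated identically by symmetry of the construction, and the constants are taken as the worse of the two.

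Third, to get $j$ iterations with $\alpha_j\to0$ exponentially, I will iterate the one-pass bound. This needs care, because Gu et al.\ feed the same noisy syndrome into each iteration; their repeated-recovery Theorem~3.5 handles exactly this. I follow their argument as an induction. After iteration $i$, the residual $e_i$ satisfies
\[
|e_i|_R\le\alpha^i|e|_R+\beta'(1+\alpha+\dots+\alpha^{i-1})|m|,
\]
and the hypothesis $a|e_i|_R+b|m|\le cN$ stays true at every step provided $c$ is reduced by a factor depending only on $\alpha$ and $\beta'$. This yields $\alpha_j=\alpha^j$ and $\beta=\beta'/(1-\alpha)$.

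The main obstacle is the bookkeeping between the cited theorems and this statement, in three places.
\begin{itemize}
\item Gu et al.\ measure syndrome noise in terms of local views at vertices, while $|m|$ here counts flipped reported check bits. Since each vertex carries a bounded number of checks, the two measures agree up to a factor of $w$ or $q$; this is the conversion absorbed into the constants.
\item Redundant check rows must be handled. They only enlarge the reported syndrome, and the decoder reads local views, so they are harmless.
\item The conclusion must be stated in reduced weight rather than plain Hamming weight. This works because the decoder's guarantee is already modulo stabilizers.
\end{itemize}

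Finally, the $O(N)$ classical work follows because each iteration performs constant work at each of the $O(|G|)=O(N)$ vertices. The fixed number of local communication steps follows because each vertex exchanges data only with its constant-degree neighbourhood in the complex.
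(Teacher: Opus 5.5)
Your proposal is correct and, like the paper, proves the Fact by assembling cited results. These are the code parameters and the bounded check weight and incidence from Leverrier--Z\'emor, the single-shot guarantee from Gu et al., and a constant-factor conversion between flipped check bits and local syndrome records.

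The real difference is how you obtain the exponential decay of $\alpha_j$. The paper reads the $j$-iteration bound directly off Theorems~3.4 and~4.20 of Gu et al. You instead derive it by composing a one-pass contraction.

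Your induction is sound. The next pass receives exactly $\operatorname{syn}(e_i)+m$, so the one-pass bound applies to $(e_i,m)$. The geometric series then lets you shrink $c$ once, which gives $a,b,c,\beta$ independent of $j$. It does presuppose a fixed $\alpha<1$ for one pass. If the cited guarantee supplies this only after some fixed number $j_0$ of iterations, run the induction over blocks of $j_0$ iterations. This still gives exponential decay with $j$-independent constants, which is all the paper needs, including $j_{\rm fin}=O(\log N)$ in Lemma~\ref{m:readout}.

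Note also that the paper cites Theorem~3.5 of Gu et al.\ for repeated recovery across cycles with fresh syndromes. It does not cite it for same-syndrome iterations inside one decoder call, which is what you attribute to it.

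Finally, your remark that ``the guarantee is already modulo stabilizers'' covers only the output side. The paper's short argument also handles the hypothesis and the term $\alpha_j|e|_R$. Replace $e$ by a minimum-weight representative $e+s$, with $s$ a stabilizer of the same type. This leaves the syndrome, and hence the decoder output, unchanged and shifts the residual by $s$. A bound in terms of $|e+s|$ therefore becomes one in terms of $|e|_R$.
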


The decoder acts separately on the $X$ and $Z$ error components and receives a classical syndrome. The data entering recovery need not lie in the codespace: the decoder theorem bounds the Pauli error relative to an encoded state. The reduced-weight form of Eq.~\eqref{eq:decoder} follows by applying the decoder theorem to a minimum-weight stabilizer representative: this replacement changes neither the syndrome nor the decoder output. The relevant statements are Theorems 3.4 and 4.20 of Ref.~\cite{GuEtAl2024}. We use their deterministic error bound. Section~\ref{sec:memory} extends the circuit analysis to coherent errors, without requiring a local-stochastic distribution for the residual error.

To choose a code for every target register size and accuracy, we also need available block lengths within a constant factor of any sufficiently large target. Namely, there are $N_0$ and $c_N$ such that each $x\ge N_0$ has an available $N\in[x,c_Nx]$. For the Tanner family this follows from the arithmetic tower in Claim 6.7 of Ref.~\cite{DinurEvraLivneLubotzkyMozes2022}. The construction uses $G_i=\mathrm{PSL}_2(q_*^i)$ for a fixed odd prime power $q_*$, with the required generating sets for all sufficiently large $i$. Since $N_i$ is a fixed multiple of $|G_i|$,
\begin{equation}
 \frac{N_{i+1}}{N_i}
 =q_*\frac{q_*^{2i+2}-1}{q_*^{2i}-1}
 \le q_*(q_*^2+1).
 \label{eq:tanner-density}
\end{equation}
Taking the first available length above $x$ gives the claim. This step is needed for the resource bound: good rate and distance alone would not control gaps between available lengths. All constants in what follows may depend on the fixed code family, physical noise strength, and circuit schedule.

\section{Optimal quantum memory}
\label{sec:memory}
The proof of Theorem~\ref{m:memory} follows the recovery cycle illustrated in Fig.~\ref{fig:memory-concept}a. We first construct the syndrome-extraction circuit and bound the effect of new faults. We then combine this bound with the \atchg{reduction of the incoming-error contribution by the factor $\alpha_j<1$ in Eq.~\eqref{eq:decoder}} to show that errors remain correctable over arbitrarily many cycles, except when a cycle contains too many faults.
The probability of that event determines the block size needed for the target accuracy.

\subsection{Syndrome extraction and recovery}
For an $X$ check, prepare one ancilla in $|+\rangle$, apply CNOT gates from the ancilla to the check's data qubits, and measure the ancilla in the $X$ basis. For a $Z$ check, prepare one ancilla in $|0\rangle$, apply CNOTs from the data qubits to the ancilla, and measure the ancilla in the $Z$ basis. Complete all $X$-check circuits before the $Z$-check circuits so that their ordering implements the intended syndrome measurement.

For either check type, the data--check incidence graph is bipartite and has maximum degree at most $\max(w,q)$. Edge coloring schedules its CNOTs in at most that many layers. Ancilla preparation and measurement add constant depth. Bounded incidence gives $O(N)$ nonzero checks and gates, so one syndrome round has $O(N)$ locations, including all waits. Because circuit depth and gate size are bounded, a single fault affects only a bounded number of data qubits and reported syndrome bits.

After the syndrome round, run $j$ parallel decoder iterations on each CSS component and apply the resulting Pauli correction. Equivalently, record the correction in a Pauli frame and interpret the next syndrome relative to that frame. In either implementation, the data wait for the decoder and these waits are charged. Conjugating a fault operation by a known Pauli does not change its support, so the same noise model applies when corrections are tracked in a frame.

\subsection{A recovery that also corrects coherent errors}
For a Pauli $P=X^xZ^z$ on one code block, define
\begin{equation}
 r(P)=\min_{u\in\mathcal S_X}|x+u|
      +\min_{v\in\mathcal S_Z}|z+v|.
 \label{eq:radius}
\end{equation}
The quantity $r(P)$ measures the error after allowing multiplication by $X$ and $Z$ stabilizers. It is subadditive and unchanged by multiplication by a stabilizer. A Pauli with $r(P)\le r$ has a stabilizer-equivalent representative of physical weight at most $r$. Let $\mathsf E(r)$ be the linear span of these Paulis. Working with the entire span allows us to keep coherent combinations of errors throughout the proof.

\begin{lemma}[Common recovery]
\label{m:common}
If $2r<d_N$, a single ideal recovery corrects $\mathsf E(r)$ on the entire codespace, including coherent combinations 
\atchg{of the form $\sum_\nu P_\nu\otimes B_\nu$, where each $P_\nu$ is a Pauli with $r(P_\nu)\le r$ and $B_\nu$ is an operator on the shared environment}.
\end{lemma}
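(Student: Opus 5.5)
The plan is to verify the Knill--Laflamme conditions for a spanning set of $\mathsf E(r)$ and then extend to the environment-coupled form by linearity. Let $\Pi$ be the projector onto the codespace, that is, the joint $+1$ eigenspace of the listed $X$- and $Z$-type generators. It suffices to show that for every pair of Paulis $P,P'$ with $r(P),r(P')\le r$,
\begin{equation}
 \Pi P^\dagger P'\Pi=\lambda_{P,P'}\Pi
\end{equation}
for some scalar $\lambda_{P,P'}$. Since $\mathsf E(r)$ is spanned by such Paulis, the condition then holds for all of $\mathsf E(r)$ by bilinearity. The standard Knill--Laflamme theorem then gives a single CPTP recovery $\mathcal R_0$, depending only on $\mathsf E(r)$ and the code, such that $\mathcal R_0(E\rho E'^\dagger)\propto\rho$ in the appropriate sense for every encoded $\rho$.

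\textbf{Step 1: the Pauli case.} I would write $P^\dagger P'$ up to phase as $X^xZ^z$. Subadditivity of $r$ together with invariance under stabilizers gives $r(P^\dagger P')\le 2r<d_N$. So there are $u\in\mathcal S_X$ and $v\in\mathcal S_Z$ with $|x+u|+|z+v|<d_N$, and the Pauli $Q=X^{x+u}Z^{z+v}$ differs from $P^\dagger P'$ by a stabilizer times a phase. For a stabilizer $s$ we have $\Pi s=\Pi$, so it is enough to treat $Q$. There are two cases.
\begin{enumerate}
 \item If $Q$ anticommutes with some generator, then $\Pi Q\Pi=0$.
 \item Otherwise $x+u\in\ker H_Z$ and $z+v\in\ker H_X$. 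Because $|x+u|<d_N$, the vector $x+u$ cannot be a nontrivial logical operator, so $x+u\in\mathcal S_X$; similarly $z+v\in\mathcal S_Z$. Then $Q$ is $\pm$ a stabilizer, or $\pm i$ times one after sign bookkeeping, and $\Pi Q\Pi$ is a scalar multiple of $\Pi$.
\end{enumerate}
One point needs care: the distance bound has to apply separately to each component. This holds because for CSS codes $d_N$ is the minimum over the $X$ and $Z$ logical weights, and each component weight is at most the sum $|x+u|+|z+v|$.

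\textbf{Step 2: the environment.} For an operator $M=\sum_\nu P_\nu\otimes B_\nu$, I would use the Kraus form of $\mathcal R_0$. In the Knill--Laflamme construction, $R_k P_\nu\Pi=c_{k\nu}\Pi$ for scalars $c_{k\nu}$, obtained by diagonalising $\lambda$ to an orthonormal error basis and using the polar decomposition. Then
\begin{equation}
 (R_k\otimes I)\,M\,(\Pi\otimes I)=\Pi\otimes\sum_\nu c_{k\nu}B_\nu .
\end{equation}
Applying $\mathcal R_0\otimes\mathrm{id}$ to $M(\rho_{LE}\ldots)M^\dagger$, where the reference is included in the $L$ system, therefore yields the encoded logical state tensored with a modified, possibly unnormalised, environment operator. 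The logical state and its entanglement with the reference are preserved exactly, and the environment absorbs all coherence between the different Pauli terms. The same argument works for sums of such terms, i.e.\ for Kraus operators of the fault channel, because the corrected logical part factors out of each Kraus term identically.

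\textbf{Expected obstacle.} The main thing to get right is the sign and phase bookkeeping in Step 1: showing that a $Q$ which commutes with all generators and has weight below $d_N$ acts on the codespace as a scalar rather than as a nontrivial logical operator. This requires the stated convention that all listed generators, including redundant rows, have phase $+1$ and define a nonempty joint eigenspace. A secondary point is that inputs may carry a residual error outside the codespace. The lemma as stated concerns the codespace only, so I would keep it at that level. The reduction to the codespace, treating the residual as an element of $\mathsf E(r)$ applied to an encoded state, is left to its later use.
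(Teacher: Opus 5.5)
Your proposal is correct and follows essentially the paper's route. You verify the Knill--Laflamme conditions on pairs of Paulis by showing that $P^\dagger P'$ is either detected or a stabilizer up to phase, because its reduced weight is below $d_N$, and then extend by linearity to the environment-coupled combinations. The only difference is that the paper specifies the recovery concretely as an ideal syndrome measurement followed by one fixed Pauli $P_s$ per attained syndrome, using that same-syndrome Paulis in $\mathsf E(r)$ agree on the code up to a scalar, whereas you take the abstract Knill--Laflamme recovery; your identity $R_kP_\nu\Pi=c_{k\nu}\Pi$ spells out the ``linearity retains all cross terms'' step that the paper states only briefly.
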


\begin{proof}
Let $\Pi$ project onto the codespace. For Pauli generators $P,Q$ of this span, $P^\dagger Q$ has a stabilizer-equivalent representative of weight at most $2r<d_N$. If it has nonzero syndrome, $\Pi P^\dagger Q\Pi=0$. If its syndrome is zero, it is a stabilizer up to phase, so $\Pi P^\dagger Q\Pi=c_{PQ}\Pi$. Expanding arbitrary linear combinations of these Pauli errors gives the Knill--Laflamme conditions for the whole span~\cite{KnillLaflamme1997}. 
\rev{An ideal recovery for the error criterion in Eq.~\eqref{eq:error}} measures the ideal syndrome and inverts 
\atchg{one fixed Pauli $P_s$ with syndrome $s$ and $r(P_s)\le r$, for each syndrome $s$ attained by such a Pauli}. Any two 
\atchg{Paulis $P_s,Q_s$ with the same syndrome and $r(P_s),r(Q_s)\le r$} act identically on the code up to a scalar.
Extending arbitrarily on other syndromes gives a CPTP map. 
Linearity retains all cross terms.
\end{proof}

\subsection{Faults in a complete cycle}

The decoder receives the actual classical measurement record. For a coherent fault, different Pauli terms can be compatible with that same record, and all must receive the same correction. The following lemma bounds their data and syndrome errors without treating those terms as a classical mixture.

\begin{lemma}[\rev{Fault propagation and the observed syndrome}]
\label{m:record}
Fix the original \rev{fault path} in \rev{one complete error-correction cycle} and its observed syndrome record $y$. Let $f$ be its number of new faulty locations, and let the incoming error operators lie in $\mathsf E(r_{\rm in})$. 
\atchg{Let $\Pi_{\rm in}$ project onto the input codespace, and let $K_y$ denote any Kraus operator for the incoming error map followed by this cycle, associated with the recorded syndrome $y$. The shared environment is retained explicitly, and any additional Kraus label is suppressed.} 
Every 
\atchg{such Kraus operator, restricted to the input codespace,} is a sum of terms
\begin{equation}
 K_y\Pi_{\rm in}
 =\sum_\nu\bigl[z_\nu\widehat F(y)e_\nu\Pi_{\rm in}\bigr]\otimes B_\nu,
 \label{eq:record-form}
\end{equation}
where $\widehat F(y)$ is the same decoder correction in every term. \atchg{Each $B_\nu$ is an operator on the retained environment, and $\nu$ labels terms in the local operator expansion. The projector $\Pi_{\rm in}$ acts as the identity on that environment.}
For fixed constants $h_d,h_s,h_l$,
\begin{align}
 r(e_\nu)&\le r_{\rm in}+h_df,&
 |m_\nu|&\le h_sf,\nonumber\\
 r(z_\nu)&\le h_lf,&
 y&=\operatorname{syn}(e_\nu)+m_\nu.
 \label{eq:record-estimates}
\end{align}
The symbols $e_\nu,z_\nu$ denote Pauli operators and $m_\nu$ a binary syndrome error. The operators $B_\nu$ include environment and Kraus indices; they are amplitudes in an operator expansion, not probabilities.
\end{lemma}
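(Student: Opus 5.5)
Our approach is to expand every faulty location into Pauli components, propagate each component through the ideal Clifford circuit, and observe that the decoder sees only the classical record $y$. Fix the fault path $F$ for the cycle, with $f=|F|$ new faulty locations. Each faulty location applies a CPTP map (or local instrument) on its operands and the shared environment. Choosing Kraus operators for it and expanding each Kraus operator in the Pauli basis on the operands gives $\sum_{P} P\otimes B_P$, where $B_P$ acts on the environment and may depend causally on earlier records. A faulty measurement that changes its reported bit is treated as a Pauli $X$ (or $Z$) before an ideal measurement. A faulty preparation is treated as an ideal preparation followed by a Pauli. The incoming error map likewise has Kraus operators in $\mathsf E(r_{\rm in})\otimes(\text{environment})$, so it expands into Paulis $e^{\rm in}$ with $r(e^{\rm in})\le r_{\rm in}$. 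Multiplying out gives $K_y\Pi_{\rm in}$ as a sum over tuples $\nu$ of Pauli choices, one for the incoming error and one for each faulty location. For each such tuple the remaining circuit is ideal and consists of Clifford gates, Pauli measurements and a Pauli correction.

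Next, we commute each inserted Pauli to the end of the syndrome round through the ideal Clifford gates. Since the extraction circuit has constant depth and bounded gate size (Section~\ref{sec:memory}), a single location's Pauli spreads to at most a constant number of data qubits and flips at most a constant number of ancilla outcomes. Summing over the $f$ locations yields a data Pauli $e_\nu$ with $r(e_\nu)\le r(e^{\rm in})+h_df$ by subadditivity of $r$, and a syndrome discrepancy $m_\nu$ with $|m_\nu|\le h_sf$. Here $m_\nu$ is the set of reported bits differing from $\operatorname{syn}(e_\nu)$, so that the term survives projection onto the outcome $y$ only if $y=\operatorname{syn}(e_\nu)+m_\nu$. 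Terms violating this relation are annihilated by the ideal measurement projector associated with $y$ and drop out. The ancillas are measured destructively and discarded, so their Pauli components contribute only phases or vanish, and these are absorbed into $B_\nu$.

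The decoder then acts on the classical $y$ alone and produces one Pauli $\widehat F(y)$ that is common to every term. Faults occurring after the syndrome measurement, during decoder waits or in the correction layer, occur at later locations. Their Pauli components, conjugated past $\widehat F(y)$ (which changes only signs), are collected into $z_\nu$ with $r(z_\nu)\le h_lf$, again because each such location touches $O(1)$ data qubits. This gives the form in Eq.~\eqref{eq:record-form}.

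The main obstacle is the bookkeeping for adaptive, environment-correlated faults. Fault operators may depend on earlier records, and the expansion must be performed for a fixed record $y$ without treating Pauli branches as a classical mixture. We plan to handle this by fixing the entire record $y$ first, so that every conditional operation is a fixed operator. The Pauli expansion is then a purely linear identity, and the environment operators $B_\nu$ absorb all amplitudes and Kraus indices. A second subtlety is the $X$-before-$Z$ ordering. A data error created during the $X$-check block may or may not be seen by the $Z$ checks depending on timing. We will charge any such mismatch to $m_\nu$, which stays within the constant $h_s$ because each location affects boundedly many checks.
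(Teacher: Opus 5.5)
Your proposal is correct and follows essentially the same route as the paper. Like the paper, you expand each faulty location into Pauli insertions around ideal preparations and measurement bras with the record fixed, keeping environment coefficients, and propagate these through the fixed Clifford circuit within bounded light cones. You then write a surviving record as $y=\operatorname{syn}(E_0)+\Delta=\operatorname{syn}(e)+(\Delta+\operatorname{syn}(P))$, and place wait and correction faults after the common $\widehat F(y)$.

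Two points are worth stating explicitly. First, the absence of an $r_{\rm in}$ term in $|m_\nu|$ rests on ideal extraction reading $\operatorname{syn}(E_0)$ exactly. Second, the unseen-syndrome mismatch $\operatorname{syn}(P)$ comes from data errors created partway through extraction: for example, $Z$-type errors created during or after the $X$-check block are missed by some or all $X$ checks. Your example, in which the $Z$ checks miss errors created during the $X$ block, has this backwards, since every $X$-type error from that block is seen by all later $Z$ checks.
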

\begin{proof}
Dilate \rev{the operation at every faulty location} and expand its local system part in a Pauli basis, retaining the environment coefficients. For a preparation or reset, include discarded systems in the environment. Refine faulty measurement instruments by their actual recorded outcome and expand the resulting local Kraus operators in matrix units. These rectangular operators can equivalently be written using local Pauli insertions and the ideal measurement bras. Propagation through the fixed Clifford circuit confines each new insertion to the forward light cone of its original location. This also treats interference between different operator terms.

For an incoming Pauli $E_0$, 
\atchg{let $V$ denote the ideal syndrome-extraction isometry, so that}
\begin{equation}
 V E_0\Pi_{\rm in}
 =E_0\Pi_{\rm in}\otimes|\operatorname{syn}(E_0)\rangle.
\end{equation}
Let $P$ be the propagated new data Pauli and $\Delta$ the resulting change to the ideal record in a particular surviving term. Its final extraction error is $e=PE_0$, whereas
\begin{equation}
 y=\operatorname{syn}(E_0)+\Delta
  =\operatorname{syn}(e)+\underbrace{\Delta+\operatorname{syn}(P)}_{m}.
 \label{eq:late-hook}
\end{equation}
A fault late in syndrome extraction can therefore change both the final data error and the discrepancy between its syndrome and the observed record. Both changes have size $O(f)$ because light cones and check incidence are bounded. There is no term proportional to $r_{\rm in}$ in the bound on $m$: ideal extraction reads the incoming error's own syndrome correctly.

Faults during decoder waits or an applied correction layer can be placed after the ideal Pauli correction. Their resulting data operator is $z$, with reduced weight $O(f)$. Every compatible term is corrected using the one deterministic function $\widehat F(y)$ of the actual record. Summing the propagated terms proves the claimed form without removing density-operator cross terms.
\end{proof}

\subsection{Keeping the residual error correctable}
The remaining task is to prevent errors from accumulating across cycles. Combining the decoder contraction with Lemma~\ref{m:record}, and absorbing the two CSS components into the constants, gives
\begin{equation}
 r_{\rm out}\le \lambda r_{\rm in}+Bf,\qquad
 \lambda=\alpha_j,\qquad B=\alpha_jh_d+\beta h_s+h_l.
 \label{eq:recurrence}
\end{equation}
This estimate applies within the decoder's stated error range. To keep every cycle in that range, choose $j$ first, so that $\lambda\le1/2$. \rev{The physical circuit depth and the bounds on fault propagation are now fixed.} Set $L=B/(1-\lambda)$ and choose $\eta>0$ sufficiently small that, for $\rho=L\eta$,
\begin{equation}
 2\rho<\delta,\qquad a(\rho+h_d\eta)+b h_s\eta<c.
 \label{eq:domain-choice}
\end{equation}
The left sides tend to zero with $\eta$, so this choice is possible.

If $r_{\rm in}\le\rho N$ and $f\le\eta N$, the second inequality in Eq.~\eqref{eq:domain-choice} puts every term in the decoder domain \emph{before} the contraction is invoked. Equation~\eqref{eq:recurrence} then gives $r_{\rm out}\le\lambda\rho N+B\eta N=\rho N$. Starting with an exact encoded state proves the bound after every cycle, provided that each preceding cycle also contains at most $\eta N$ faults. The stronger history-dependent estimate is
\begin{equation}
 r_t\le B\sum_{u=1}^t\lambda^{t-u}f_u.
 \label{eq:history}
\end{equation}
Each additional recovery suppresses the contribution of an earlier fault by a factor $\lambda$. Thus the decoder does not need to finish correcting the current error before the next cycle begins. Lemma~\ref{m:common} applies uniformly since $2\rho N<d_N$.

\subsection{Failure probability and complete cost}

Call a \rev{fault path} \emph{good} if every cycle has at most $\eta N$ faulty locations. The preceding argument shows that every \rev{good fault path} leaves only errors in $\mathsf E(\rho N)$, regardless of the reported syndromes. There are at most $c_0N$ physical locations in a cycle, with $c_0$ including the decoder waits. Put $m=\lfloor\eta N\rfloor+1$. If $m\le\lfloor c_0N\rfloor$, the local-stochastic condition gives
\begin{equation}
 \Pr(f>\eta N)\le \binom{\lfloor c_0N\rfloor}{m}p^m
 \le\left(\frac{e c_0Np}{m}\right)^m
 \le A e^{-\gamma N}
 \label{eq:tail}
\end{equation}
for sufficiently small fixed $p$, with constants $A,\gamma>0$. Otherwise the event is empty. This estimate is applied to the original fault locations. No conditioning on an observed syndrome is needed.

A union bound over the $S$ cycles gives $P_{\rm bad}\le AS e^{-\gamma N}$, even when faults in different cycles are correlated. On each \rev{good fault path}, Lemma~\ref{m:common} supplies the same recovery and restores the logical state, including its reference. Summing all measurement branches \rev{for that fixed fault path} gives a trace-preserving channel. Averaging over the input-independent fault distribution therefore yields
\begin{equation}
 \mathcal R\circ\mathcal P\circ\mathrm{Enc}
 =(1-P_{\rm bad})\mathrm{id}_K+P_{\rm bad}\mathcal B,
 \label{eq:channel-mixture}
\end{equation}
where $\mathcal B$ is a channel if $P_{\rm bad}>0$ and is arbitrary otherwise. The mixture is over \rev{classical fault paths}, not terms in the Pauli expansion. Consequently,
\begin{equation}
 \|\mathcal R\circ\mathcal P\circ\mathrm{Enc}-\mathrm{id}_K\|_\diamond
 \le 2AS e^{-\gamma N}.
 \label{eq:memory-error-bound}
\end{equation}

Choose an available block length above
\begin{equation}
 x=\max\left\{N_0,\frac K R,
                 \frac1\gamma\log\frac{2AS}{\varepsilon}\right\}.
 \label{eq:block-choice}
\end{equation}
Equation~\eqref{eq:tanner-density} allows $N\le c_Nx$. The block then has capacity for $K$ logical qubits and meets the target error. Each cycle uses $O(N)$ locations, $O(N)$ live qubits, and $O(N)$ classical work. This proves Theorem~\ref{m:memory}.
Each cycle uses $O(N)$ locations, $O(N)$ physical qubits, and $O(N)$ classical work. This proves Theorem~\ref{m:memory}.

\begin{corollary}[Optimal memory scaling under independent erasures]
\label{m:tight}
In the independent 
\atchg{erasure noise model}, assume that 
\atchg{after each elementary circuit time step, every physical qubit still retained by the protocol is independently erased with a fixed probability $p_{\rm er}>0$, with the erased locations reported and the qubits replaced}.
\atchg{When the resulting circuit-location noise has strength $p<p_*$ for the threshold $p_*$ of Theorem~\ref{m:memory},} the minimum total circuit cost obeys
\begin{equation}
 C_{\min}=\Theta\!\left(S\left(K+\log\frac S\varepsilon\right)\right).
 \label{eq:tight}
\end{equation}
\end{corollary}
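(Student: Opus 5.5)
The corollary follows by pairing the upper bound of Theorem~\ref{m:memory} with the lower bound in Eq.~\eqref{eq:prior-bound} from Ref.~\cite{BhartiHaugTanggara2026}. It therefore splits into two directions, and the main work is checking that both statements refer to the same model and the same cost measure.

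\emph{Upper bound.} First we check that the independent erasure model satisfies the assumptions of Theorem~\ref{m:memory}. Fix the circuit from Section~\ref{sec:memory}. Attach each erasure event of a qubit after an elementary time step to a circuit location: the wait or gate location on that qubit in that step. The set $F$ of faulty locations is then drawn independently of the logical input. Because a location has boundedly many operands and each operand is erased independently with probability $p_{\rm er}$, we get
\[
\Pr(A\subseteq F)\le (c_1p_{\rm er})^{|A|}
\]
for a constant $c_1$, so Eq.~\eqref{eq:ls} holds with $p=c_1p_{\rm er}$. This is the quantity the corollary's hypothesis requires to satisfy $p<p_*$. An erasure followed by replacement is a CPTP operation on the location's operands; the reported flag is a reliable classical record, which may be used or ignored. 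Hence Theorem~\ref{m:memory} applies and gives a memory with
\[
C=O(SN)=O\!\left(S\left(K+\log\frac S\varepsilon\right)\right),
\]
which bounds $C_{\min}$ from above.

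\emph{Lower bound.} Here we invoke the result of Ref.~\cite{BhartiHaugTanggara2026}, which states that any protocol storing $K$ qubits for $S$ steps with error at most $\varepsilon$ under independent erasures has cost $\Omega(S(K+\log(S/\varepsilon)))$. Two compatibility points must be checked, and this is the main obstacle.

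The first is the time unit. The theorem counts memory cycles, while erasures act per elementary step. Since each cycle has the fixed duration $\tau_{\rm cyc}$, $S$ cycles correspond to $S\tau_{\rm cyc}$ elementary steps, which changes $S$ and $\log S$ only by constant factors.

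The second is the cost measure and error criterion. Every qubit alive in a time step occupies at least one location in that step (a wait, gate, preparation, or measurement), so $C$ is at least the qubit-time volume used in the prior bound. Our diamond-norm criterion is at least as strong as the prior error notion. A weaker accuracy requirement can only lower the required cost, so any protocol meeting our criterion obeys the prior lower bound.

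With both directions established, the two bounds coincide up to constants and give Eq.~\eqref{eq:tight}. If the prior bound turns out to be stated in terms of qubit-time volume rather than locations, the location-counting observation above converts it directly into a bound on $C$.
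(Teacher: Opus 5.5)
Your proposal is correct and follows the paper's proof. The upper bound comes from Theorem~\ref{m:memory}, and the lower bound from the storage-location bound of Ref.~\cite{BhartiHaugTanggara2026}, using that total location count dominates storage cost and that constant-duration cycles rescale $S$ only by a constant factor. Your derivation of a local-stochastic bound from independent erasures is extra detail the paper folds into the hypothesis $p<p_*$. The paper instead stresses that the prior bound holds even with ideal adaptive recovery, which is what lets it cover noisy-recovery protocols; your proof leaves this point implicit.
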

\begin{proof}
The upper bound is Theorem~\ref{m:memory}. The storage-location lower bound of Ref.~\cite{BhartiHaugTanggara2026} applies even with ideal adaptive recovery. Total circuit cost is at least storage cost, and a constant-depth cycle changes duration by only a constant factor. This proves the matching lower bound. 
It is asserted for the stated 
\atchg{independent erasure noise model}, not for every distribution satisfying Eq.~\eqref{eq:ls}, which includes the zero-fault distribution.
\end{proof}

\subsection{Destructive logical readout}

\atchg{The encoded-output error criterion in Eq.~\eqref{eq:error}} does not require a final physical correction to the exact codespace.
If the desired output is classical, the block can instead be measured and decoded. The longer final decoding then takes place after the data have been measured, as in the readout procedure discussed in Section~3.3 of Ref.~\cite{GuEtAl2024}. Here we also track coherent fault operations and the conditional state of an external reference.

\begin{lemma}[Destructive Tanner readout]
\label{m:readout}
\atchg{Theorems~\ref{m:memory} and~\ref{m:clifford}} also allow a terminal measurement of every logical qubit of a block in the $Z$ basis, or every logical qubit in the $X$ basis.
\atchg{Choose the incoming residual bound $r(P)\le\rho N$ as in Eq.~\eqref{eq:domain-choice}, and a constant terminal fault fraction $\theta>0$ satisfying $a(\rho+h\theta)<c$, where $h$ bounds the number of physical outcome-bit errors per faulty terminal location as in the proof. Then} destructive physical measurement followed by the full Tanner decoder gives the ideal logical measurement instrument whenever the preceding \rev{fault path} is good and the terminal measurement layer has at most $\theta N$ faults.
\atchg{Including the terminal layer preserves the asymptotic diamond-norm error bounds~\eqref{eq:memory-error-bound} and~\eqref{eq:clifford-bound} for the intended logical measurement instrument, and the quantum-cost scalings~\eqref{eq:memory-result} and~\eqref{eq:clifford-cost}, with changed constants and, if necessary, a smaller noise threshold.}
\end{lemma}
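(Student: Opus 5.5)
We treat the $Z$-basis case; the $X$-basis case follows by exchanging the roles of $X$ and $Z$. Only the $X$-component of the incoming error matters. The plan has four steps: reduce the coherent incoming error to a classical bit-string error, bound the noisy outcome string, decode with the full Tanner decoder, and then rerun the probability and cost accounting.

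\textbf{Step 1: reduce to a classical bit-string error.} On a good preceding fault path, Lemma~\ref{m:common} and the cycle analysis place every incoming error operator in $\mathsf E(\rho N)$. Expand the incoming Kraus operators as $\sum_\nu P_\nu\otimes B_\nu$ with $P_\nu=X^{x_\nu}Z^{z_\nu}$. On the encoded state of Eq.~\eqref{eq:css-encoding}, the $Z$ part only contributes a phase on each computational basis string $u+\sum_i a_ix_i$, since $Z^{z}$ is diagonal. An ideal $Z$-basis measurement destroys such phases once we condition on the outcome. The cross terms between different $\nu$ that share the same $X$ part then reduce to conditional operators on the environment and the reference.

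I will therefore compute the joint distribution of the outcome string together with the conditional reference states. The claim is that the outcome equals $u+\sum_i a_ix_i+x_\nu+m$, where $u$ is uniform in $\mathcal S_X$ and $m$ is the terminal measurement error. Each faulty terminal location flips at most $h$ outcome bits, so $|m|\le h\theta N$. Any $X$ component $x_\nu$ that is stabilizer-equivalent to a lower-weight vector can be absorbed into $u$. We may therefore take $|x_\nu|\le\rho N$ in reduced weight.

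\textbf{Step 2: decode the classical word.} The measured word is a codeword of $\ker H_Z$, namely $u+\sum_i a_ix_i$, plus an $X$-type error $e=x_\nu+m$ with $|e|_R\le(\rho+h\theta)N$. The full Tanner decoder is applied to the classical syndrome $H_Z(\cdot)$ computed reliably from the word, so it sees no syndrome error. Fact~\ref{m:decoder} with $m=0$ requires $a|e|_R\le cN$, which is exactly the hypothesis $a(\rho+h\theta)<c$. Iterating the decoder $O(\log N)$ times, or until the syndrome is cleared, drives the residual reduced weight to zero, since the decoder contracts it by the factor $\alpha_j<1$ each time. The residual is therefore an $X$ stabilizer. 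Reading the logical bits $a_i=z_i\cdot(\text{corrected word})$ then returns exactly $a$.

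\textbf{Step 3: assemble the logical instrument.} The result is a deterministic function of $a$ alone, the same for every term $\nu$. Summing the Kraus terms therefore yields the ideal logical $Z$ measurement instrument, including the conditional reference states. The outcome is conditioned only on $a$; the environment factors out because its operators sum to a trace-preserving map. This step is the main obstacle. I will handle it by writing the measurement Kraus operators as $\langle w|$ acting on $K_y\Pi_{\rm in}$, following the matrix-unit expansion of Lemma~\ref{m:record}. Every nonvanishing term with decoded value $a'$ must have $a'=a$. Hence the post-measurement operator equals $|a\rangle\langle a|$ on the logical space, tensored with an environment operator that is independent of $a$, exactly as in the proof of Lemma~\ref{m:common}.

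\textbf{Step 4: probability and cost.} The terminal layer has $O(N)$ locations, so the tail bound of Eq.~\eqref{eq:tail}, with $\theta$ in place of $\eta$ and possibly a smaller $p$, gives a failure probability of at most $A'e^{-\gamma'N}$. Adding this to the union bound and averaging over fault paths as in Eq.~\eqref{eq:channel-mixture} preserves Eq.~\eqref{eq:memory-error-bound} with changed constants. For Theorem~\ref{m:clifford} the same argument applies verbatim. The quantum cost increases only by one layer of $N$ measurements. The final decoding is classical and occurs after the data are destroyed, so it adds no waits.
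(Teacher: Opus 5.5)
Your proposal follows the paper's proof essentially step for step. The shared steps are:
\begin{itemize}
\item the term-wise reduction of each fine record to $y=v+e$ with $v\in C_X$ and $|e|\le(\rho+h\theta)N$;
\item noiseless-syndrome decoding for $O(\log N)$ iterations until the integer reduced residual weight drops below one (your repeated fixed-$j$ contraction is equivalent to the paper's single run with $\alpha_{j_{\rm fin}}N<1$);
\item the Kraus form $\langle d_Z(y)|\otimes B_y$, with trace preservation forcing the summed environment operators to equal the identity;
\item the same tail bound and cost count.
\end{itemize}

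The only omission is the Pauli-frame case. There the raw outcomes also carry the $X$ part of the recorded frame Pauli, which need not have small reduced weight, so your bound on $|e|$ fails for the raw string. You must first apply that known bit relabeling, which is common to every term, before decoding, as the paper does.
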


\begin{proof}
Consider $Z$ readout and write a Pauli term of the incoming state using a stabilizer-equivalent representative of weight at most $\rho N$. Ideal physical $Z$ outcomes have support in $C_X=\ker H_Z$. Expand each faulty measurement instrument at its actual reported outcome. Let $y$ be the actual reported string after the known bit relabeling specified by the incoming Pauli frame; with physical corrections it is the raw reported string. This relabeling is common to every term and is included in the terminal classical processing. For every surviving term, $y=v+e$, where $v\in C_X$ and $|e|\le\rho N+h f$ for a fixed constant $h$ and $f$ terminal measurement faults. Stabilizer replacement changes only a scalar on the encoded input.

Run the full $X$-error decoder on $H_Zy=H_Ze$, with no syndrome noise, and call its correction $c(y)$. Shrink $\eta$ in Eq.~\eqref{eq:domain-choice}, if necessary, and choose a fixed terminal fault fraction $\theta>0$ such that
\begin{equation}
 a(\rho+h\theta)<c.
 \label{eq:readout-margin}
\end{equation}
For $f\le\theta N$, choose $j_{\rm fin}=O(\log N)$ such that $\alpha_{j_{\rm fin}}N<1$. Fact~\ref{m:decoder} with noiseless syndrome gives an integer reduced residual weight smaller than one, hence $e+c(y)\in\mathcal S_X$. Consequently
\begin{equation}
 d_Z(y)_i=z_i\cdot[y+c(y)]=z_i\cdot v
 \label{eq:tanner-readout}
\end{equation}
is the same ideal logical outcome for every term compatible with $y$.

More explicitly, \rev{the encoding in Eq.~\eqref{eq:css-encoding} obeys}
$\langle v_Z|\mathrm{Enc}_Q
 =2^{-\dim\mathcal S_X/2}\langle r_Z(v)_Z|$
for $v\in C_X$, where $r_Z(v)_i=z_i\cdot v$; the bra vanishes otherwise. A propagated Pauli contributes only a scalar phase and the bit shift $e$. \rev{Identify the correctable subspace with a logical subsystem and a syndrome subsystem. On these subsystems, every branch at record $y$ has the form}
$\langle d_Z(y)_Z|\otimes B_y$,
with environment and Kraus indices included in $B_y$. For a decoded result $r$, sum $B_y^\dagger B_y$ over its fine records and call the result $M_r$. Trace preservation gives
$\sum_r|r\rangle\langle r|\otimes M_r=I$,
so orthogonality implies $M_r=I$ for every $r$. This proves the full measurement instrument, including the conditional state of an arbitrary reference, without assigning probabilities to coherent Pauli terms.

The local-stochastic bound makes $f>\theta N$ exponentially unlikely by the same counting argument as Eq.~\eqref{eq:tail}. Adding this one terminal bad event changes only the constants in the total error bound. The physical measurement layer uses $O(N)$ locations. 
\atchg{Since all output data have been destructively measured,} the subsequent classical iterations add no quantum waits.
Exchange $X$ and $Z$ for the other basis.
\end{proof}

\rev{For specified logical representatives} $z_i$, evaluating all parities in Eq.~\eqref{eq:tanner-readout} directly takes $O(Nk_N)$ additional bit operations, or $O(NK)$ if only the $K$ stored coordinates are requested. These representatives can be dense. We therefore distinguish \rev{the $O(N)$ classical work of each recovery cycle while the quantum data are stored} from the final classical conversion to logical outcomes. Neither this conversion nor the final decoder requires quantum waits after all output blocks have been measured. \rev{Measuring selected logical qubits in different Pauli bases requires a separately specified physical circuit.}

For encoded quantum output, the theorem returns the block and its frame. A full final decoder run while quantum data remain stored would add $O(N\log N)$ waits, and faults during those waits would require continued protection. Such a step is not needed for \rev{the encoded-output error criterion in Eq.~\eqref{eq:error}}. \rev{Returning the quantum output to unencoded physical qubits exposes it to noise during the final decoding operations}~\cite{ChristandlFawziGoswami2024,ChristandlFawziGoswami2026}.

\section{Extension to logical operations}
\label{sec:clifford}
The memory can also support logical operations without changing the recovery depth. A concrete example is transversal CNOT between two identical Tanner blocks. \rev{Each block encodes $k_N$ logical qubits. Let $\mathbf a,\mathbf b\in\mathbb F_2^{k_N}$ label their logical computational basis states, with $|\mathbf a\rangle_L=\mathrm{Enc}_Q|\mathbf a\rangle$. Applying $N$ disjoint physical CNOT gates, one between each pair of corresponding physical qubits, gives
\begin{equation}
 |\mathbf a\rangle_L\otimes|\mathbf b\rangle_L
 \longmapsto
 |\mathbf a\rangle_L\otimes|\mathbf a+\mathbf b\rangle_L,
 \label{eq:transversal-cnot}
\end{equation}
where addition is componentwise modulo two. Thus the physical layer implements $k_N$ logical CNOT gates simultaneously, one on each corresponding logical pair.} An error can spread from one block to the other, but only by a bounded factor. A fixed number of additional decoder iterations absorbs this increase.

\subsection{Recovery between physical Clifford circuits}
The same reasoning applies whenever a logical Clifford operation has a constant-depth physical Clifford implementation with bounded-size gates. The implementation must map the input codespace to an output code with the same uniform decoder guarantees. Its \emph{error spread} $s$ is the maximum number of qubits in the image of a single-qubit Pauli under conjugation by the ideal circuit. Bounded depth and gate size make this spread constant.

\begin{theorem}[Recovery between logical Clifford layers]
\label{m:clifford}
Fix a constant number $b_0$ of length-$N$ Tanner blocks. 
Suppose every prescribed logical layer has a physical Clifford implementation of constant depth, bounded gate size, and $O(N)$ locations. 
The circuit maps its input product code to its declared output product code, implements the stated logical unitary in the fixed encodings, and has physical Pauli spread at most a fixed constant. 
\atchg{Assume in addition that every output code satisfies Fact~\ref{m:decoder} with constants independent of $N$, $T$, and the layer index.}
\atchg{Let $\mathcal P_T$ be the noisy physical channel for the complete sequence of $T$ gate-and-recovery layers, including all decoder waits and the output Pauli-frame register when used, and let $\mathcal U_t$ be the ideal logical unitary channel of layer $t$. The maps $\mathrm{Enc}$ and $\mathcal R$ are the fixed input encoding and output recovery-and-decoding maps defined analogously to Eq.~\eqref{eq:error} for the declared input and output codes.}
There is a threshold $p_*>0$, depending on these uniform bounds, such that at every fixed $p<p_*$ a recovery cycle after each of $T$ layers gives
\begin{equation}
 \|\mathcal R\circ\mathcal P_T\circ\mathrm{Enc}
       -\mathcal U_T\circ\cdots\circ\mathcal U_1\|_\diamond
 \le 2A T e^{-\gamma N},
 \label{eq:clifford-bound}
\end{equation}
for constants $A,\gamma>0$. If these implementations and \rev{the prescribed allocation of the $K$ logical input qubits to code blocks} are available uniformly along the code family, one may choose $N=O(K+1+\log(T/\varepsilon))$ and obtain
\begin{equation}
 C=O\!\left(T\left(K+1+\log\frac T\varepsilon\right)\right).
 \label{eq:clifford-cost}
\end{equation}
The schedule and logical layers are fixed. Their declared action includes all logical coordinates of the blocks. If only $K$ coordinates carry the task, the encoding fixes the remaining inputs and the output map discards the unused coordinates; the induced ideal channel must be the intended task. The input and output are encoded, and all decoder waits are included. The theorem assumes the physical gate implementations; it does not synthesize arbitrary logical Clifford circuits.
\end{theorem}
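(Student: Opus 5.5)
We adapt the memory proof of Section~\ref{sec:memory} essentially verbatim, treating a ``cycle'' now as one logical gate layer followed by one recovery cycle on each of the $b_0$ blocks. The only structural change is that incoming errors are first conjugated by the ideal Clifford circuit before syndrome extraction. The decoding reference also changes: the decoder guarantees now come from the declared output code of that layer rather than a fixed code.

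\textbf{Step one: a version of Lemma~\ref{m:record} for a gate-plus-recovery cycle.} Write the noisy layer as the ideal Clifford $U_t$ with faults inserted. Faults inside the gate layer are propagated forward through the remaining constant-depth Clifford gates and the syndrome circuit, exactly as in the proof of Lemma~\ref{m:record}. Thus each surviving term is again bounded by $O(f)$ in data error and syndrome error. An incoming Pauli $E_0$ with $r(E_0)\le r_{\rm in}$ (summed over blocks) becomes $U_tE_0U_t^\dagger$. Its physical representative of weight $\le r_{\rm in}$ maps to one of weight $\le s\,r_{\rm in}$. Stabilizers of the input code map to stabilizers of the output code because $U_t$ maps codespace to codespace. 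Hence the reduced weight in the output code is at most $s\,r_{\rm in}$. We obtain the form~\eqref{eq:record-form} with $r(e_\nu)\le s\,r_{\rm in}+h_df$ and the other estimates unchanged. The constants $h_d,h_s,h_l$ are now uniform over layers, by the constant depth, gate size and spread. The common-correction property again holds because $\widehat F(y)$ is a deterministic function of the record.

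\textbf{Step two: the recurrence and domain choice.} Applying Fact~\ref{m:decoder} blockwise for the output code gives
\[
 r_{\rm out}\le \alpha_j s\,r_{\rm in}+Bf .
\]
We choose $j$ (uniformly, since the constants are layer-independent) so that $\lambda=\alpha_j s\le 1/2$; this is the ``fixed number of additional decoder iterations''. Then we pick $\eta$ as in Eq.~\eqref{eq:domain-choice}, with $\rho+h_d\eta$ replaced by $s\rho+h_d\eta$ and $\delta$ the uniform distance constant. This keeps every good path inside $\mathsf E(\rho N)$ per block after every layer, with $2\rho N<d_N$. Lemma~\ref{m:common} then applies to each output code. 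To turn this into the logical statement~\eqref{eq:clifford-bound}, we note the following. On a good path, the error of the physical channel after layer $t$ is $\mathsf E(\rho N)$-correctable relative to $U_t\cdots U_1\mathrm{Enc}$, where $U_t$ stands for the ideal physical circuits. By assumption, $U_t$ restricted to the codespace equals the encoded $\mathcal U_t$. Composing, the fixed recovery $\mathcal R$ of the final output code yields $\mathcal U_T\circ\cdots\circ\mathcal U_1$ on the good event, for all fault paths and records, and with the reference retained. Pauli-frame tracking is handled by conjugating the frame through the Clifford $U_t$, which is a known classical update preserving Pauli supports.

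\textbf{Step three: probability, cost, and the main obstacle.} The tail bound~\eqref{eq:tail} applies per layer with $O(N)$ locations. A union bound over $T$ layers and the mixture argument~\eqref{eq:channel-mixture} give the bound $2ATe^{-\gamma N}$. Choosing $N$ as in Eq.~\eqref{eq:block-choice}, with $K/(b_0R)$ and $T$ in place of $S$, and using Eq.~\eqref{eq:tanner-density}, gives the cost~\eqref{eq:clifford-cost}; the $+1$ covers $K$ small. The main obstacle I expect is step one's claim that the reduced weight is preserved up to the factor $s$ across a code change. We must check that the minimizing stabilizer representative in the input code maps into the output stabilizer group, not merely the output normalizer. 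We must also check that the output code's $r(\cdot)$, defined via its own $\mathcal S_X,\mathcal S_Z$, is the one fed to the decoder. Mixing of the $X$ and $Z$ components (e.g.\ transversal Hadamard) must be absorbed into the factor $s$ summed over both components.
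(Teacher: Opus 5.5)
Your proposal follows the paper's proof essentially step for step: the blockwise-summed reduced weight, input stabilizers mapping to output stabilizers, the branch expansion of Lemma~\ref{m:record} now starting with $U$, the recurrence with a fixed $j$ compensating the spread, the modified domain condition, and the union bound over $T$ layers. The one correction is the $X/Z$-mixing issue you flagged yourself: because $r(\cdot)$ counts $X$ and $Z$ weights separately, an image supported on $s$ qubits can have $r$ up to $2s$ (e.g.\ $X\mapsto Y$), so the paper uses $g=2s$ in general ($g=s$ for CNOT circuits) and chooses $j$ with $g\alpha_j\le 1/2$; likewise use $K/R$ rather than $K/(b_0R)$ in Eq.~\eqref{eq:block-choice} unless the prescribed allocation is balanced.
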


The full proof is in Appendix~\ref{app:clifford-proof}. I have made the existing reference in the following explanatory paragraph explicitly identify Theorem~\ref{m:clifford}.

The proof changes only the contraction estimate. If a gate increases the reduced error bound by a factor at most $g$, the coefficient multiplying the previous cycle's error becomes $g\alpha_j$ instead of $\alpha_j$. Since $\alpha_j$ decreases exponentially with the number $j$ of parallel iterations, one fixed choice of $j$ makes $g\alpha_j<1$. Each complete gate-and-recovery cycle still has $O(N)$ locations, so the same fault-count estimate and common recovery give the claimed error bound. The full proof of Theorem~\ref{m:clifford} is given in Appendix~\ref{app:clifford-proof}.

Besides transversal CNOT, a logical Pauli product has a physical Pauli representative that can be applied in one parallel layer once specified. \rev{Applying Hadamard to every physical qubit exchanges the $X$-type and $Z$-type stabilizer generators. In the output CSS code, choose logical $X$ and $Z$ representatives $z_i$ and $x_i$, respectively. With this choice of logical basis, the physical operation implements Hadamard on every logical qubit.} The decoder guarantees are unchanged after exchanging the two CSS components. \rev{The output codespace equals the input codespace only when the exchanged stabilizers define the same code; its logical action in an independently fixed basis must still be checked.} These are standard CSS-code operations~\cite{GottesmanBook2024}; the result here is that noisy recovery protects them with the stated cost.

\subsection{Selecting individual logical pairs}
Transversal CNOT acts on all corresponding logical pairs. Selecting only some pairs requires a different physical circuit: omitting physical CNOTs from the transversal layer does not, in general, preserve the code or select the desired logical qubits.

Appendix~\ref{s:cnots} characterizes all directed CNOT circuits between two identical CSS blocks that implement a prescribed logical CNOT matrix. The characterization gives an affine family of physical CNOT matrices. A matrix with bounded row and column weights yields a constant-depth circuit with linear cost and bounded error spread, and therefore satisfies Theorem~\ref{m:clifford}. The remaining requirement is to find such sparse representatives uniformly for the desired logical operations. We state this condition explicitly in Assumption~\ref{s:sparseassumption}; the algebraic characterization alone does not establish it. Thus transversal operations provide unconditional examples, while the same cost for selected logical pairs remains conditional on an additional circuit construction.

\section{Discussion}
\label{sec:discussion}
We have shown that an encoded quantum memory can attain optimal spacetime scaling with noisy recovery and efficient decoding. Storing $K$ logical qubits for $S$ cycles uses $O(S(K+\log(S/\varepsilon)))$ physical locations, including syndrome ancillas and all waits during classical processing. The upper bound holds for the stated class of coherent and correlated fault operations and matches the independent-erasure lower bound. Thus the noise introduced by recovery does not force an additional asymptotic memory cost.

The same block supplies logical capacity and reliability. Its redundancy grows with the logarithm of the storage duration and inverse error, and this contribution is shared across all stored qubits. The relative overhead is therefore constant once $K$ is at least of order $\log(S/\varepsilon)$. Separate surface-code patches pay a growing reliability cost for each logical qubit, while benefiting from local check interactions~\cite{DennisEtAl2002,FowlerEtAl2012}. Our result identifies what long-range connectivity and a good quantum LDPC code can achieve when the cost of noisy recovery is included.

The decoder does not have to remove every error before the next cycle begins. It only has to reduce the accumulated error sufficiently to leave room for new faults. The Tanner single-shot decoder already provides this contraction after a fixed number of iterations~\cite{GuEtAl2024}. Our circuit analysis shows that the contraction survives faults during extraction, correction, and decoder waits, with $O(N)$ classical work per cycle. Earlier syndromes can be discarded, and neither the number of iterations nor the syndrome history grows with storage time. Constant physical latency still requires enough parallel classical hardware and fixed-time communication along the decoder graph.

This reasoning also permits computation within the protected memory. A bounded-depth physical Clifford layer spreads each error by only a constant factor, which can be compensated by a fixed increase in the number of decoder iterations. Consequently, the cost remains $O(T(K+\log(T/\varepsilon)))$ for the supported depth-$T$ schedules. Transversal CNOT between identical blocks and specified logical Pauli layers give direct examples; whole-block Hadamard is covered with the corresponding change to the output code. These gates can be interleaved with recovery without introducing a growing delay. 

Selecting individual logical qubits is the next gate-construction problem. Our affine characterization lists every directed physical CNOT circuit with a given logical action, and bounded row and column weights suffice for constant-depth execution. The unresolved question is whether the required sparse matrices can be found for every desired logical selection in a suitable Tanner family. Code switching and logical measurement offer alternative routes to addressability~\cite{CohenEtAl2022,HuangEtAl2023,GolowichChangZhu2025,ZhengJiangXu2025}. Combining such methods with this memory would require compatible code interfaces and a bound on all preparation and waiting costs. \rev{Previously proposed constant-depth Clifford constructions based on encoded ancillas illustrate this issue~\cite{ZhengEtAl2020}. Even when the gates acting on the data have constant depth, preparing the ancillas and protecting the data while they wait can contribute to the total cost.}

The noise assumptions also delimit the result. Fault operations may be coherent and correlated, but their locations are sampled independently of the logical input and obey a local-stochastic bound. This suppresses histories with too many faults in a cycle. On all remaining histories, one common recovery corrects the entire linear space of possible error operators, including their coherent superpositions. The argument does not require the state to return to the codespace after each cycle. Extending the same optimal cost to noise models with small coherent perturbations at every location would require a different fault expansion; general-noise fault-tolerance results provide a relevant starting point~\cite{ChristandlFawziGoswami2025}.

A complete computational architecture must also prepare encoded states, admit unknown quantum inputs, and supply non-Clifford gates. Preparing a known code state, encoding an unknown state, and returning an unencoded quantum output have distinct requirements~\cite{ChristandlFawziGoswami2024,ChristandlFawziGoswami2026}. Establishing these interfaces for the same Tanner family would determine when the memory advantage survives in a larger computation. The higher time overheads in universal constructions~\cite{YamasakiKoashi2024,TamiyaKoashiYamasaki2024,NguyenPattison2025,HanLiu2026} account for capabilities beyond the protected layers considered here. A useful next step is to determine which of those capabilities can be added while preserving linear cost and constant duration per layer. %
Assessing practical performance will require explicit schedules for finite codes, estimates of decoder communication costs, and noise thresholds. These will determine the physical resources needed for a useful device beyond the asymptotic scaling established here.

\section*{Acknowledgements}

AT is supported by the CQT PhD scholarship, the Google PhD fellowship program, and the CQT Young Researcher Career Development Grant. Generative artificial intelligence tools were used to assist with ideation, language editing, organization, and the preparation of portions of the manuscript.

\begingroup
\small
\setlength{\bibsep}{1pt}
\bibliographystyle{unsrtnat}
\bibliography{references}
\endgroup

\clearpage
\appendix
\section{Proof of the Clifford extension}
\label{app:clifford-proof}
We give the modifications to the memory proof needed when a physical Clifford layer precedes each recovery cycle. The layer can spread existing errors, but by only a constant factor; the decoder compensates for this increase with a fixed number of iterations.

\begin{proof}[Proof of Theorem~\ref{m:clifford}]
For several blocks, extend $r(P)$ in Eq.~\eqref{eq:radius} by summing the reduced $X$ and $Z$ weights over all blocks. \rev{Every Pauli with $r(P)\le r$ still has a stabilizer-equivalent representative of physical weight at most $r$.} If $2r<d_N$, the proof of Lemma~\ref{m:common} applies to the product code.

Let $U$ be one of the given Clifford implementations. It maps input stabilizers to output stabilizers. Reducing a Pauli before conjugation, using the spread bound, and then separating the output $X$ and $Z$ components gives
\begin{equation}
 r_{\rm out}(UPU^\dagger)\le g r_{\rm in}(P),\qquad g=2s.
 \label{eq:spread}
\end{equation}
For a CNOT circuit, $X$ and $Z$ errors propagate separately and $g=s$ suffices. The circuit in Lemma~\ref{m:record} now starts with $U$. Its branch expansion has $e_\nu U\Pi_{\rm in}$ in place of $e_\nu\Pi_{\rm in}$ and satisfies $r(e_\nu)\le g r_{\rm in}+h_df$. The bounds on new syndrome errors and errors during waits are unchanged in form, with constants for the complete gate and recovery circuit.

The resulting recurrence is
\begin{equation}
 r_{\rm out}\le g\alpha_j r_{\rm in}
       +(\alpha_jh_d+\beta h_s+h_l)f.
 \label{eq:clifford-recurrence}
\end{equation}
Choose a fixed $j$ so that $\lambda=g\alpha_j\le1/2$, set $B=\alpha_jh_d+\beta h_s+h_l$ and $\rho=B\eta/(1-\lambda)$, and take $\eta>0$ sufficiently small that
\begin{equation}
 2\rho<\delta,\qquad a(g\rho+h_d\eta)+b h_s\eta<c.
\end{equation}
This keeps every Pauli term in the decoder's domain and preserves $r\le\rho N$ whenever a complete layer has at most $\eta N$ faults. The fault-count estimate~\eqref{eq:tail} applies with the new constant number of locations per data qubit. For sufficiently small fixed $p$, a union bound over $T$ layers gives $P_{\rm bad}\le AT e^{-\gamma N}$. The common recovery gives the intended logical unitary on every \rev{good fault path}, proving Eq.~\eqref{eq:clifford-bound}.

Under \rev{the assumptions on uniform gate implementations and allocation of logical qubits}, choose $N$ as in Eq.~\eqref{eq:block-choice}, with $S$ replaced by $T$ and with sufficient capacity for \rev{the prescribed allocation of logical qubits}. A fixed number of blocks changes the cost only by a constant. Each complete layer has $O(N)$ locations and physical qubits, which proves Eq.~\eqref{eq:clifford-cost}.
\end{proof}

\section{Physical circuits for selected logical CNOT gates}
\label{s:cnots}

This appendix supplies the algebraic characterization used in Section~\ref{sec:clifford}. We consider circuits in which all controls lie in one CSS block and all targets in a second identical block. These CNOT gates commute, so a binary matrix specifies the circuit. We first characterize the matrices with a prescribed logical action, then state the additional sparsity condition needed for constant-depth implementation. The characterization is unconditional and applies to any CSS code; its efficient application to Tanner blocks remains conditional on that sparsity condition. General logical Clifford synthesis can be formulated through binary symplectic constraints~\cite{RengaswamyEtAl2020}. Here we restrict to directed CNOT circuits between two CSS blocks, for which an explicit affine parametrization exposes the row and column weights that control physical depth.

\subsection{Physical circuits for a prescribed logical action}

All vector spaces are over $\mathbb F_2$. Write $S_X=\mathcal S_X$ and $S_Z=\mathcal S_Z$ for the stabilizer spaces of Section~\ref{sec:model}, and define
\begin{equation}
 S_X=\operatorname{row}H_X,\qquad
 S_Z=\operatorname{row}H_Z,\qquad
 C_X=\ker H_Z,\qquad C_Z=\ker H_X.
 \label{s:cssspaces}
\end{equation}
We use column vectors for physical Pauli supports.  The CSS condition is $S_X\subseteq C_X$.  Write $r_X=\dim S_X$, $r_Z=\dim S_Z$, and $k=N-r_X-r_Z$.  Choose full-column-rank matrices $G_X$ and $G_Z$ with images $S_X$ and $S_Z$.  These are algebraic choices used in the proof; the measured check lists $H_X,H_Z$ are not replaced by these bases.

Fix paired logical representatives $L_X,L_Z\in\mathbb F_2^{N\times k}$ satisfying
\begin{equation}
 C_X=S_X\oplus\operatorname{im}L_X,\qquad
 C_Z=S_Z\oplus\operatorname{im}L_Z,\qquad
 L_Z^TL_X=I_k.
 \label{s:paired}
\end{equation}
The columns of $L_X$ and $L_Z$ specify the paired logical Pauli operators. The basis is fixed throughout a computation. For a binary $N\times N$ matrix $A$, define the physical unitary
\begin{equation}
 U_A|u\rangle|v\rangle=|u\rangle|v+Au\rangle.
 \label{s:shear}
\end{equation}
It is the product of the CNOT gates from qubit $j$ in the first block to qubit $i$ in the second block for which $A_{ij}=1$.  These gates commute.  \rev{A binary matrix $M\in\mathbb F_2^{k\times k}$ specifies the logical action $|x\rangle_L|y\rangle_L\mapsto|x\rangle_L|y+Mx\rangle_L$, where $x,y\in\mathbb F_2^k$ label all $k$ logical qubits in each block. To select CNOT gates between corresponding logical pairs, take $M=D_m=\operatorname{diag}(m)$ for a binary vector $m\in\mathbb F_2^k$: apply the logical CNOT on pair $i$ if $m_i=1$, and act as the identity on that pair if $m_i=0$.}

\begin{theorem}[Exact CNOT representatives]
\label{s:affine}
\rev{For a CSS code with phase-$+1$ $X$-type and $Z$-type stabilizer generators} and the paired basis~\eqref{s:paired}, $U_A$ preserves the two-block code and has logical action $M$ if and only if
\begin{align}
 AS_X&\subseteq S_X,& A^TS_Z&\subseteq S_Z,
 \label{s:preserve}\\
 AL_X-L_XM&\in S_X^{\oplus k}.
 \label{s:logical}
\end{align}
Equivalently, the complete set of such matrices is
\begin{equation}
 A=L_XML_Z^T+G_XR+BG_Z^T,
 \quad
 R\in\mathbb F_2^{r_X\times N},\quad
 B\in\mathbb F_2^{N\times r_Z}.
 \label{s:affineformula}
\end{equation}
This affine space has dimension $N(r_X+r_Z)-r_Xr_Z$.  Its induced action on logical $Z$ operators is fixed by
\begin{equation}
 A^TL_Z-L_ZM^T\in S_Z^{\oplus k}.
 \label{s:dualaction}
\end{equation}
\end{theorem}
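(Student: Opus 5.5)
We compute the action of $U_A$ on stabilizers and logical operators in the Heisenberg picture. Conjugation gives $X^{u}\otimes I\mapsto X^{u}\otimes X^{Au}$, $I\otimes X^{v}\mapsto I\otimes X^{v}$, $I\otimes Z^{w}\mapsto Z^{A^Tw}\otimes Z^{w}$, and $Z^{w}\otimes I\mapsto Z^w\otimes I$.

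\emph{Code preservation.} $U_A$ maps the two-block stabilizer group into itself exactly when $X^{s}\otimes X^{As}$ is a stabilizer for every $s\in S_X$, i.e.\ $AS_X\subseteq S_X$. Likewise $Z^{A^Tt}\otimes Z^{t}$ must be a stabilizer for every $t\in S_Z$, i.e.\ $A^TS_Z\subseteq S_Z$. Signs remain $+1$ because CNOT conjugation of $X$-type or $Z$-type Paulis introduces no phases. Conversely, if $U_A$ preserves the codespace it normalizes the stabilizer group; the images of first-block $X$ stabilizers are $X$-type, so they must lie in $S_X\oplus S_X$. This gives \eqref{s:preserve}.

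\emph{Logical action.} Using the encoding \eqref{eq:css-encoding}, $U_A$ sends $|u+L_Xx\rangle|v+L_Xy\rangle$ with $u,v\in S_X$ to $|u+L_Xx\rangle|v+Au+AL_Xx+L_Xy\rangle$. Given $AS_X\subseteq S_X$, this state equals $|x\rangle_L|y+Mx\rangle_L$ for all $x,y$ exactly when $AL_Xx\in L_XMx+S_X$ for all $x$, i.e.\ \eqref{s:logical}. The condition $A^TS_Z\subseteq S_Z$ is still needed for preservation; it is implied on states by the $X$ computation but is stated for the stabilizer group.

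\emph{Dual action.} For \eqref{s:dualaction}, pair $A^TL_Z$ against $C_X=S_X\oplus\operatorname{im}L_X$. For $s\in S_X$, $(A^TL_Z)^Ts=L_Z^TAs=0$ since $As\in S_X\subseteq C_X^{\perp\,\text{rel}}$, where $L_Z\subseteq C_Z=S_X^\perp$. Also $L_Z^TAL_X=L_Z^TL_XM=M$. Hence $A^TL_Z-L_ZM^T$ annihilates $C_X$, so it lies in $C_X^\perp=S_Z$ columnwise. Its columns lie in $C_Z$ because $A^TS_Z$-type reasoning applies: $H_X$ applied to $A^TL_Z$ gives $(G_X^TA^TL_Z)$, which vanishes since $AS_X\subseteq S_X\subseteq C_X$ is orthogonal to $L_Z$.

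\emph{Affine parametrization.} This is the main obstacle. First, $A_0=L_XML_Z^T$ satisfies all three conditions:
\begin{itemize}
\item $A_0S_X=0$ because $L_Z^TS_X=0$;
\item $A_0^TS_Z=0$ because $L_X^TS_Z=0$;
\item $A_0L_X=L_XM$.
\end{itemize}
It therefore suffices to show that the homogeneous solution space $\mathcal H$ (conditions with $M=0$) equals $W=\{G_XR+BG_Z^T\}$.

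The inclusion $W\subseteq\mathcal H$ is direct. For example, $BG_Z^T$ kills $C_X\supseteq S_X$ and $\operatorname{im}L_X$, since $G_Z^T$ annihilates $C_X=S_Z^\perp$. Also $G_XR$ maps into $S_X$, and its transpose $R^TG_X^T$ kills $S_Z\subseteq S_X^\perp$.

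For the reverse inclusion, take $A\in\mathcal H$. Then $A C_X\subseteq S_X$, because $C_X=S_X+\operatorname{im}L_X$. Choose a complement basis completing $C_X$ to $\mathbb F_2^N$; since $C_X=\ker G_Z^T$, there is $B$ such that $A-BG_Z^T$ vanishes off $C_X$ modulo $S_X$-valued maps. More cleanly, $\bar A:\mathbb F_2^N/S_X\to\mathbb F_2^N/S_X$ kills $C_X/S_X$, so it factors through $G_Z^T$. This gives $A=BG_Z^T+(\text{map into }S_X)=BG_Z^T+G_XR$. The condition $A^TS_Z\subseteq S_Z$ then holds automatically, so $\mathcal H=W$.

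\emph{Dimension.} The map $(R,B)\mapsto G_XR+BG_Z^T$ has domain dimension $r_XN+Nr_Z$. Its kernel is $\{G_XR=BG_Z^T\}$: both sides have image in $S_X$ and row space in $S_Z$, so they equal $G_XCG_Z^T$ for a unique $C\in\mathbb F_2^{r_X\times r_Z}$, using full column rank. The kernel therefore has dimension $r_Xr_Z$, giving $N(r_X+r_Z)-r_Xr_Z$.
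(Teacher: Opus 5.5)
Your proof is correct and follows essentially the paper's route. It uses Heisenberg conjugation for code preservation, the coset-state computation for the phase-free logical action, the particular solution $L_XML_Z^T$ with homogeneous solutions characterized by $D(C_X)\subseteq S_X$, the kernel $\{(CG_Z^T,G_XC)\}$ for the dimension count, and pairing against $C_X=S_X\oplus\operatorname{im}L_X$ for the dual action. The one difference is cosmetic: you factor the homogeneous solution modulo $S_X$ through $G_Z^T$ to get $B$ and then lift the $S_X$-valued remainder through $G_X$ to get $R$, whereas the paper first lifts $D|_{C_X}$ through $G_X$ to get $R$ and then factors $D-G_XR$ through $G_Z^T$ to get $B$.
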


\begin{proof}
Pauli conjugation by $U_A$ gives
\begin{align}
 X(u)\otimes I&\longmapsto X(u)\otimes X(Au),&
 I\otimes X(v)&\longmapsto I\otimes X(v),\nonumber\\
 Z(u)\otimes I&\longmapsto Z(u)\otimes I,&
 I\otimes Z(v)&\longmapsto Z(A^Tv)\otimes Z(v).
 \label{s:pauliaction}
\end{align}
Consequently, code preservation is equivalent to~\eqref{s:preserve}.  The logical action of a control-block $X$ operator is $M$ exactly when~\eqref{s:logical} holds.  In fact, the first condition in~\eqref{s:preserve} and~\eqref{s:logical} already imply $AC_X\subseteq C_X$, and hence the second condition in~\eqref{s:preserve}.

These conditions determine the unitary without an undetected logical phase.  \rev{With these stabilizer generators, the code has the orthonormal basis}
\begin{equation}
 |x\rangle_L=|S_X|^{-1/2}\sum_{s\in S_X}|s+L_Xx\rangle.
 \label{s:cosetbasis}
\end{equation}
For each fixed $s\in S_X$, the target-block sum over $t\in S_X$ is translated by $As+(AL_X-L_XM)x\in S_X$.  Therefore $U_A$ sends the two coset sums to the coset sums labeled by $x,y+Mx$, with no phase.  This also proves the desired action on \rev{arbitrary superpositions and states entangled with an external reference}.

\rev{The particular solution $A_0=L_XML_Z^T$} satisfies all the conditions.  Every term $G_XR+BG_Z^T$ maps $C_X$ into $S_X$, so every matrix in~\eqref{s:affineformula} is a solution.  Conversely, let $D=A-A_0$ be the difference of two solutions.  Then $D(C_X)\subseteq S_X$.  Since $G_X$ is an isomorphism onto $S_X$, the restriction $D|_{C_X}$ has a linear lift through $G_X$.  Extend this lift to $R:\mathbb F_2^N\to\mathbb F_2^{r_X}$.  The map $D-G_XR$ vanishes on $C_X=\ker G_Z^T$.  Because $G_Z^T$ is surjective, it factors as $BG_Z^T$ for some $B$.

The parameter map $(R,B)\mapsto G_XR+BG_Z^T$ has kernel precisely
\begin{equation}
 (R,B)=(JG_Z^T,G_XJ),\qquad
 J\in\mathbb F_2^{r_X\times r_Z}.
 \label{s:parameterkernel}
\end{equation}
Indeed, its vanishing first implies that $R$ vanishes on $C_X$, and then surjectivity of $G_Z^T$ determines $B$.  Subtracting this kernel dimension gives the stated affine dimension.

Finally, each column of $A^TL_Z-L_ZM^T$ pairs to zero with $S_X$ by~\eqref{s:preserve}, and pairs to zero with $\operatorname{im}L_X$ by~\eqref{s:logical} and~\eqref{s:paired}.  It therefore lies in $C_X^\perp=S_Z$, proving~\eqref{s:dualaction}.
\end{proof}

\subsection{When the circuit has constant depth}
The affine description determines which physical CNOTs give the right logical gate. Their scheduling cost depends on how many gates share each physical qubit. Bounded row and column weights give precisely the required bound.

\begin{lemma}[Cost of a sparse representative]
\label{s:sparsecost}
Suppose the row and column weights of $A$ are at most a constant $\Delta_A\ge1$.  Then $U_A$ uses at most $\Delta_A N$ CNOT gates, \rev{no ancillary qubits}, and at most $2\Delta_A-1$ layers.  \rev{It maps every single-qubit Pauli error to a Pauli error of weight at most $\Delta_A+1$.}  Given the sparse edge list, a schedule with these properties can be found in $O(N)$ word operations.
\end{lemma}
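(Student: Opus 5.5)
The plan is to view $U_A$ as a bipartite multigraph and reduce every claim to standard facts about bipartite edge coloring. The left vertex set consists of the $N$ physical qubits of the control block and the right vertex set of the $N$ qubits of the target block. There is one edge $j\to i$ for each entry $A_{ij}=1$. The definition~\eqref{s:shear} already presents $U_A$ as the product of these commuting CNOTs, so no ancillas are needed and the gate count equals the number of nonzero entries of $A$. Summing column weights gives at most $\Delta_AN$ gates. The degree of control qubit $j$ is the weight of column $j$, and the degree of target qubit $i$ is the weight of row $i$, so the graph has maximum degree at most $\Delta_A$.

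For the layer count, a single layer is a set of CNOTs acting on disjoint qubits, i.e.\ a matching. Because the gates commute, any partition of the edges into matchings gives a valid schedule in any order. A greedy proper edge coloring already uses at most $2\Delta_A-1$ colors, since each edge meets at most $2(\Delta_A-1)$ other edges. König's theorem would even give $\Delta_A$ colors, but the stated bound only needs the greedy argument. Processing the edges in list order and giving each the smallest color unused at its two endpoints takes $O(\Delta_A)$ work per edge if we keep a per-vertex bitmask or array of used colors. With $\Delta_A$ constant this is $O(N)$ word operations for the $\le\Delta_AN$ edges.

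For error spread, I will use the conjugation rules~\eqref{s:pauliaction}. A single-qubit $X$ on control qubit $j$ maps to $X$ on $j$ together with $X(Ae_j)$ on the target block, which has weight at most $1+\Delta_A$. A $Z$ on control qubit $j$ is unchanged. An $X$ on target qubit $i$ is unchanged. A $Z$ on target qubit $i$ maps to $Z(A^Te_i)$ on the control block together with $Z$ on $i$, again with weight at most $\Delta_A+1$. Since $Y\propto XZ$, a single-qubit $Y$ has image supported on the union of these supports, which lies on the same qubits. So every single-qubit Pauli maps to weight at most $\Delta_A+1$.

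The only point needing care is bookkeeping rather than a real obstacle. The layer bound must be for the greedy coloring so that it is achievable in linear time, and the $O(N)$ claim relies on $\Delta_A$ being constant so that color lookups are constant-time.
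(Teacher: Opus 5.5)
Your proposal is correct and follows the paper's own proof. Both build the bipartite control--target graph of maximum degree $\Delta_A$, use greedy edge coloring to get at most $2\Delta_A-1$ layers in $O(N)$ word operations, and apply the conjugation rules~\eqref{s:pauliaction} to obtain the $\Delta_A+1$ spread bound; you also spell out the gate count via column weights and the $Y$ case, which the paper leaves implicit.
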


\begin{proof}
Make a bipartite graph with one vertex for each control and target qubit and one edge for each nonzero matrix entry.  Its maximum degree is $\Delta_A$.  Greedy edge coloring uses at most $2\Delta_A-1$ colors because an edge meets at most $2\Delta_A-2$ other edges.  Each color is a valid gate layer.  For constant degree, maintaining the used colors and assigning a color to each edge costs $O(N)$ word operations.  Equation~\eqref{s:pauliaction} bounds the spread of every single-qubit $X$, $Z$, or $Y$ by $\Delta_A+1$.
\end{proof}

Theorem~\ref{s:affine} does not supply a sparse matrix in its affine space.  \rev{The particular solution $L_XML_Z^T$ can be dense}, as can both logical-representative matrices.  Gaussian elimination gives a description of the space but does not prove that \rev{any prescribed selection of logical pairs} has \rev{a representative with bounded row and column weights}.

\begin{assumption}[\rev{Sparse circuits for selected logical CNOT gates}]
\label{s:sparseassumption}
A good Tanner-code family with the block-length property of Section~\ref{sec:model} satisfying the hypotheses of Theorem~\ref{m:clifford} has one fixed paired logical basis for each block length.  There is a constant $\Delta_A$ such that \rev{every selection vector $m$} has a representative~\eqref{s:affineformula} with row and column weights at most $\Delta_A$.  A classical algorithm outputs its sparse edge list in $O(N)$ word operations \rev{per selection vector}.  Any size-dependent setup is included in a reusable $O(N)$-work setup, and the algorithm does not require \rev{a table of all selections whose construction and storage are excluded from the cost} or an uncharged dense change of logical basis.
\end{assumption}

\begin{corollary}[Conditional cost of selected CNOT schedules]
\label{s:cnotcorollary}
Under Assumption~\ref{s:sparseassumption}, and at every fixed $p$ below the threshold of Theorem~\ref{m:clifford} for the resulting circuits, a prescribed depth-$T$ schedule of selected CNOT layers on a fixed number of encoded blocks of the chosen size can be executed \rev{with encoded quantum output}, logical diamond-norm error at most $\varepsilon$, and
\begin{align}
 C_{\rm FT}&=O\!\left(T[K+\log(T/\varepsilon)]\right),\nonumber\\
 Q_{\rm peak}&=O\!\left(K+\log(T/\varepsilon)\right).
 \label{s:cnotcost}
\end{align}
Setup, compilation, and scheduling use $O(TN)$ word operations, \rev{while error correction during the computation uses} $O(TN)$ \rev{Boolean gates with bounded fan-in}.  Compilation is performed \rev{before the encoded input is supplied}.  \rev{Each logical layer and its following recovery take constant physical time.}  Explicit storage of all compiled edge lists uses $O(TN)$ address words, or $O(TN\log N)$ bits.
\end{corollary}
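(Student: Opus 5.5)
The corollary follows by feeding the sparse representatives of Assumption~\ref{s:sparseassumption} into Theorem~\ref{m:clifford}. We then check that the classical side stays within the stated budgets.

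\emph{Step 1: each layer meets the hypotheses of Theorem~\ref{m:clifford}.} For a selected CNOT layer on a pair of blocks, the layer is specified by a selection vector $m$. The assumption supplies a matrix $A$ of the form~\eqref{s:affineformula} with $M=D_m$ and row and column weights at most $\Delta_A$. Theorem~\ref{s:affine} shows that $U_A$ preserves the two-block code, so the output code is the input Tanner product code. Decoder constants are therefore unchanged and uniform in $N$, $T$ and the layer index. The same theorem shows that $U_A$ implements $|x\rangle_L|y\rangle_L\mapsto|x\rangle_L|y+D_mx\rangle_L$ exactly in the fixed encoding~\eqref{s:cosetbasis}, with no logical phase. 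Lemma~\ref{s:sparsecost} then gives depth at most $2\Delta_A-1$, at most $\Delta_A N$ gates, no ancillas, and single-qubit spread at most $\Delta_A+1$. For a CNOT circuit this yields $g=s\le\Delta_A+1$ in~\eqref{eq:spread}. With a fixed number $b_0$ of blocks, a layer is a product of pairwise circuits on disjoint block pairs, or a constant number of sublayers, so all constants stay uniform.

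\emph{Step 2: error and quantum cost.} Theorem~\ref{m:clifford} then gives~\eqref{eq:clifford-bound} for every $p$ below its threshold. Choosing $N$ as in~\eqref{eq:block-choice}, with $S\to T$ and capacity $b_0k_N\ge K$, gives error at most $\varepsilon$. The available-length property gives $N=O(K+\log(T/\varepsilon))$. Each gate-plus-recovery layer has $O(N)$ locations, so $C_{\rm FT}=O(TN)$. Since the blocks and recovery ancillas are $O(N)$ qubits, $Q_{\rm peak}=O(N)$. Constant physical time per layer follows from the constant depth of $U_A$ together with the constant-time recovery cycle assumed in the model.

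\emph{Step 3: classical accounting.} This is the step I expect to require the most care. Before the input is supplied, run the reusable $O(N)$ setup once. Then run the assumed $O(N)$-word algorithm for each of the $T$ selection vectors, and greedily edge-color each output as in Lemma~\ref{s:sparsecost}, for $O(TN)$ word operations in total. Storing the $O(N)$ edges per layer uses $O(TN)$ address words of $O(\log N)$ bits each. During the computation, each recovery cycle runs a fixed number $j$ of local decoder iterations with bounded check degree. This should be expressible as $O(N)$ bounded-fan-in Boolean gates per cycle, giving $O(TN)$ over all layers. The subtlety is to confirm two things. First, Fact~\ref{m:decoder}'s ``$O(N)$ classical work'' really compiles to bounded-fan-in gates, since each local update depends on a constant number of bits. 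Second, Pauli-frame bookkeeping through $U_A$, which maps the frame by $A$ and $A^T$ via~\eqref{s:pauliaction}, also costs $O(N)$ gates per layer because $A$ is sparse. I would verify both by writing each update as a constant-size local function.
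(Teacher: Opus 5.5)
Your proposal is correct and follows the same route as the paper's proof. It uses Theorem~\ref{s:affine} for the exact logical action and Lemma~\ref{s:sparsecost} for constant depth and bounded spread. It then applies Theorem~\ref{m:clifford} with $N$ chosen as in Eq.~\eqref{eq:block-choice} (with $S\to T$), and bounds compilation work by $O(TN)$. The two points you flag in Step~3 both go through: each local decoder update is a fixed-size function of a constant number of bits, and the frame update by the sparse $A$ and $A^T$ makes each output bit an XOR of at most $\Delta_A+1$ input bits. These checks make explicit what the paper leaves implicit.
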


\begin{proof}
Theorem~\ref{s:affine} proves the exact logical action, and Lemma~\ref{s:sparsecost} supplies a constant-depth, linear-work physical layer of uniformly bounded spread.  Apply Theorem~\ref{m:clifford}.  Choose an available block length above the capacity and reliability requirements, $N\ge c_1K+c_2\log(T/\varepsilon)$, with constants sufficient for that theorem.  The block-length property bounds the excess length by a constant factor.  Compiling and scheduling \rev{all selections of logical pairs} takes $O(TN)$ work.  \rev{Every wait while quantum data remain stored} is included in the execution cost of Theorem~\ref{m:clifford}.
\end{proof}

The assumption concerns corresponding logical pairs in a fixed basis.  The corollary does not construct arbitrary logical matchings, \rev{Hadamard gates on individually selected logical qubits}, or non-Clifford gates.

\end{document}